\pdfoutput=1
\newif\ifFull
\Fullfalse
\ifFull
\documentclass[11pt]{article}
\usepackage[margin=1.25in]{geometry}
\else
\documentclass{llncs}
\renewenvironment{proof}{\noindent{\bf Proof:}}{\hspace*{\fill}\qed\bigskip}
\fi

\usepackage{graphicx,wrapfig,subfig}
\usepackage{amsfonts,amsmath,amssymb}

\ifFull
\usepackage{mathptm}
\usepackage{amsthm}
\newtheorem{theorem}{Theorem}[section]
\newtheorem{lemma}[theorem]{Lemma}

\fi
\usepackage{hyperref}
\usepackage{cite}

	% max fraction of floats at top

%opening
\title{Planar Lombardi Drawings for Subcubic Graphs}
\ifFull
\author{David Eppstein \\ 
 \\
 Department of Computer Science, University of California, Irvine, USA.}
\else
\author{David Eppstein}
\institute{Department of Computer Science, University of California, 
	   Irvine, USA.}
\fi

\begin{document}

\maketitle

\pagestyle{plain}

\begin{abstract}
We prove that every planar graph with maximum degree three has a planar drawing in which the edges are drawn as circular arcs that meet at equal angles around every vertex.  Our construction is based on the Koebe--Thurston--Andreev circle packing theorem, and uses a novel type of Voronoi diagram for circle packings that is invariant under M\"obius transformations, defined using three-dimensional hyperbolic geometry. We also use circle packing to construct planar Lombardi drawings of a special class of 4-regular planar graphs, the medial graphs of polyhedral graphs, and we show that not every 4-regular planar graph has a Lombardi drawing. We have implemented our algorithm for 3-connected planar cubic graphs.
\end{abstract}

\section{Introduction}

\emph{Lombardi drawing} is a style of graph drawing, named after artist Mark Lombardi, in which the edges of a graph are drawn as circular arcs and in which every vertex is surrounded by edges that meet at equal angles at the vertex---that is, the drawing has \emph{perfect angular resolution}~\cite{DunEppGoo-GD-10a,DunEppGoo-GD-10b}. Several families of graphs are known to have Lombardi drawings, including regular graphs (under certain restrictions on their factorizations into regular subgraphs) and certain highly symmetric graphs~\cite{DunEppGoo-GD-10b}. Lombardi drawings have also been used to draw plane trees with perfect angular resolution in polynomial area, something that would be impossible for straight line drawings~\cite{DunEppGoo-GD-10a}.
All graphs have drawings that relax the constraints of Lombardi drawing to allow polyarcs or unequal angles~\cite{CheCunGoo-GD-11,DunEppGoo-GD-11}, but the improved aesthetic quality of true Lombardi drawings makes it of interest to determine more precisely which graphs have such drawings.

\emph{Planarity}, the avoidance of crossing edges, is of great importance both in Lombardi drawing and in graph drawing more generally. By F\'ary's theorem, every planar graph can be drawn planarly with straight line segments for its edges, and therefore it can also be drawn with circular arcs. However, those arcs may not meet at equal angles. Indeed, our past work has found planar graphs all of whose Lombardi drawings have crossings~\cite{DunEppGoo-GD-10b,DunEppGoo-GD-11}. Very few positive results on planar Lombardi drawing are known: only trees, Halin graphs (the graphs formed from plane trees by adding a cycle connecting the leaves) and certain highly symmetric planar graphs have been proven to have planar Lombardi drawings~\cite{DunEppGoo-GD-10b}.

In this paper we take a major step forward in our knowledge of planar Lombardi drawings, and in the applicability of the  Lombardi drawing style, by showing that all planar graphs of maximum degree three have planar Lombardi drawings.
The heart of our method applies to 3-connected 3-regular planar graphs: we apply the Koebe--Andreev--Thurston circle packing theorem to the dual graph, and then use three-dimensional hyperbolic geometry to construct a novel M\"obius-invariant Voronoi diagram of the circle packing, which we show to be a Lombardi drawing of the input. Our implementation of this method produces drawings in which the overall drawing and the individual faces are all approximately circular and in which the spacing of the vertices is locally uniform. We extend these results to graphs that are neither 3-regular nor 3-connected by using  bridge-block trees and SPQR trees to decompose the graph into 3-connected subgraphs, drawing these subgraphs separately, and using M\"obius transformations to glue them together into a single drawing. We also use circle packing in a different way to construct Lombardi drawings of a special class of 4-regular planar graphs, the medial graphs of polyhedral graphs. However, as we show, not every 4-regular planar graph has a Lombardi drawing.

\section{Preliminaries}

\subsection{M\"obius transformations}

Let ${\mathbb{S}}^2$ denote the space formed by adding a single point $\infty$ ``at infinity'' to the Euclidean plane; this space is also known as the one-dimensional complex projective line $\mathbb{P}^1(\mathbb{C})$. In ${\mathbb{S}}^2$, straight lines may be interpreted as limiting cases of circles, with infinite radius and containing the point $\infty$.
A \emph{M\"obius transformation}~\cite{Sch-79} is a map from ${\mathbb{S}}^2$  to itself that transforms every circle (or line) into another circle (or line). Using complex-number coordinates, these transformations may be represented as the \emph{fractional linear transformations}
\ifFull
\[
z\mapsto \frac{az+b}{cz+d}
\]
\else
and their conjugates.
\fi
\ifFull
Here $a$, $b$, $c$, and $d$ are complex, and $ad-bc\ne 0$ to prevent the transformation from defining a constant function. If $c=0$ then $\infty$ is mapped to itself; otherwise, the value of $z$ for which $zc+d=0$ gets mapped to $\infty$, and $\infty$ gets mapped to $a/c$. Multiplying all four of $a$, $b$, $c$, and $d$ by the same complex number leaves the transformation unchanged, so the set of M\"obius transformations has six real degrees of freedom. This number of degrees of freedom is sufficient to allow every triangle to be mapped by a M\"obius transformation to every other triangle: every
\else
Every
\fi
two triangles may be mapped to each other by M\"obius transformations.

M\"obius transformations may also be defined from \emph{inversions}. If $O$ is a circle centered at  point~$o$ with radius $r$, inversion through $O$ maps any point $p$ to another point $q$ on the ray from $o$ through $p$, at distance $r^2/d(o,p)$ from~$o$. $O$~is fixed by the inversion, the inside of $O$ becomes the outside and vice versa, and $o$ trades places with $\infty$. In the limiting case of a line, inversion becomes reflection across the line. Every M\"obius transformation may be represented as a composition of a finite set of inversions.

M\"obius transformations are \emph{conformal mappings}: they preserves the angle of every two incident curves. Since they preserve both circularity and angles, they preserve the property of being a Lombardi drawing.
\ifFull
As long as the transformation does not take a vertex or edge of the drawing to infinity, the transformation of a Lombardi drawing is another Lombardi drawing and the transformation of a planar Lombardi drawing is another planar Lombardi drawing.
\fi

\subsection{Hyperbolic geometry}

To understand our drawing algorithm, it will be helpful to understand some qualitative features of hyperbolic geometry, avoiding detailed calculations~\cite{CanFloKen-FoG-97}.

In the \emph{upper halfspace model} of three-dimensional hyperbolic space,  hyperbolic space is represented by an open halfspace of three-dimensional Euclidean space, but with a non-Euclidean distance metric.  The boundary plane of the Euclidean halfspace does not belong to the hyperbolic space but may be thought of as the set of ``points at infinity'' for the hyperbolic space. It is convenient to add one more point $\infty$ to this boundary plane, so that it becomes a copy of ${\mathbb{S}}^2$.
Hyperbolic lines are represented by Euclidean semicircles that meet the boundary plane at right angles, or by Euclidean rays perpendicular to the boundary plane; the vertical rays may be viewed as the limiting case of semicircles with one endpoint at~$\infty$.
Hyperbolic planes are represented by Euclidean hemispheres that meet the boundary plane at right angles in a circle, or by halfplanes that touch the point $\infty$ and meet the boundary plane perpendicularly in a line.

Hyperbolic space is locally Euclidean: within sufficiently small balls, hyperbolic distances may be approximated arbitrarily well by Euclidean distances. For every two reference frames (a choice of a point within the space, and a system of Cartesian coordinates for the local Euclidean geometry near that point) a unique symmetry of the space takes one reference frame into the other. Symmetries of hyperbolic space may be extended to the plane at infinity, on which they act as M\"obius transformations. Every M\"obius transformation of ${\mathbb{S}}^2$ corresponds uniquely to a symmetry of hyperbolic space.

\subsection{Circle packing}

In the form that we need it, the Koebe--Thurston--Andreev circle packing theorem~\cite{Ste-ICP-05} states that the vertices of every maximal planar graph may be represented by circles with disjoint interiors, such that two vertices are adjacent if and only if the corresponding two circles are tangent.

It is not known how to find circle packings in strongly polynomial time, but it is possible to find approximate packings by numerical algorithms that are polynomial in both the number of vertices and the accuracy of approximation~\cite{Moh-DM-93}.
\ifFull
Our implementation uses a numerical relaxation procedure detailed by Collins and Stephenson~\cite{ColSte-CGTA-03} for finding packings of triangulated planar
graphs in which the outer face need not be a triangle and the outer circles have prespecified radii, a more general case than the maximal planar case that we need.
Their procedure
\else
We use a numerical relaxation procedure by Collins and Stephenson~\cite{ColSte-CGTA-03} that
\fi
repeatedly chooses a circle in the packing, computes the angular defect by which its neighboring circles either fail to surround it or surround it by a larger angle than $2\pi$, determines a representative radius for the neighbors such that if they all had the same radius as each other they would have the same defect, and replaces the radius of the central circle by a new value such that, if the surrounding circles all had the representative radius, the angular defect would be zero. Each of these steps may be performed using simple trigonometric calculations. As Collins and Stephenson show, this method converges rapidly to a unique solution, the system of radii for a valid packing. Once a close approximation to the the radii has been calculated, the positions of the circle centers are not difficult to determine.

\begin{figure}[t]
\centering\includegraphics[height=1.25in]{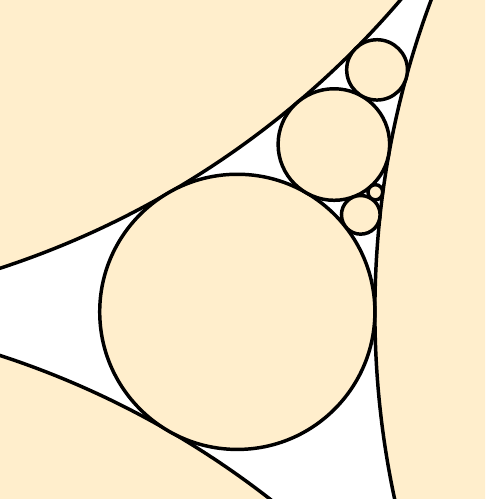}\qquad\qquad
\includegraphics[height=1.25in]{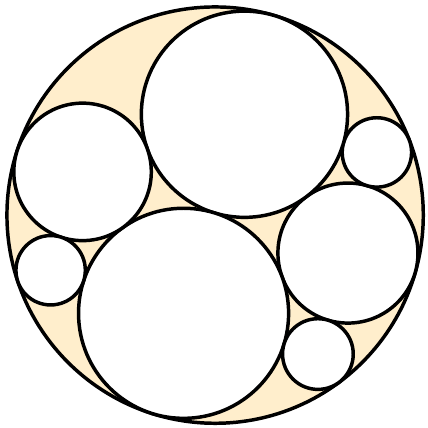}
\caption{Left: the central region of a circle packing constructed by the Collins--Stephenson procedure; the outer three circles, shown only partially in the figure, have equal radii. Right: a transformed packing with one circle exterior to the others, maximizing the minimum circle radius.}
\label{fig:optmob}
\end{figure}

The circle packings constructed by the Collins--Stephenson procedure (with equal outer radii) yield unsatisfactory Lombardi drawings, with one vertex placed at $\infty$. To improve our drawings, we find a M\"obius transformation of the packing for which one chosen circle surrounds all the others. Among all  transformations fixing the outer circle we choose the one that maximizes the radius of the smallest inner circle (Figure~\ref{fig:optmob}). This problem of finding a radius-maximizing M\"obius transformation may be expressed (using the connection between M\"obius transformations and three-dimensional hyperbolic geometry) as a \emph{quasiconvex program}, a problem of finding the minimum value of a pointwise maximum of quasiconvex functions. It may be solved either combinatorially in linear time using LP-type optimization algorithms or numerically using local improvement procedures;  the theory of quasiconvex programs  guarantees that there are no local optima in which the local improvement might get stuck~\cite{BerEpp-WADS-01,Epp-MSRI-05}. Since we are already using a numerical method to find circle packings, our implementation also takes the numerical approach to find the best transformation.

\subsection{Triangle centers}

A \emph{triangle center} is a function from Euclidean triangles to Euclidean points
\ifFull
that is \emph{equivariant} under similarities of the Euclidean plane, in the sense that
\else
such that
\fi
performing a similarity and then constructing the center gives the same result as constructing the center and then performing the similarity. Hundreds of triangle centers are known, and include many well-known points determined from a triangle, such as its centroid, circumcenter, incenter, and orthocenter~\cite{Kim-TCCT-98}.
It is convenient, in computing the position of a triangle center, to use \emph{barycentric coordinates}, weights for which the center is the weighted average of the vertices. 
\ifFull
Alternatively, triangle centers may be specified using \emph{trilinear coordinates}, triples of numbers proportional to the distances from each of the triangle's sides. If the trilinear coordinates are $\alpha:\beta:\gamma$ then the barycentric coordinates are $\alpha\,a:\beta\,b:\gamma\,c$ where the factors $a$, $b$, and $c$ are the lengths of the triangle sides opposite the vertex whose weight is $\alpha$, $\beta$, or $\gamma$ respectively.
\fi

There are two triangle centers that (as an unordered pair of points) are equivariant under M\"obius  transformations,  not just  Euclidean similarities. One of these two centers, the first \emph{isodynamic point}, may be constructed by transforming the given triangle to an equilateral triangle (in such a way that the interiors of the circumcircles of the triangles map to each other), choosing the centroid of the equilateral triangle, and reversing the transformation. The second isodynamic point may be constructed similarly using $\infty$ in place of the centroid.
\ifFull
Alternatively, the first isodynamic point may be calculated from its trilinear coordinates, which are
\[
\sin(A + \pi/3) : \sin(B + \pi/3) : \sin(C + \pi/3),
\]
where $A$, $B$, and $C$ are the three angles of the given triangle. The same formula with $-\pi/3$ in place of $+\pi/3$ gives the trilinears of the second isodynamic point.
\fi

\section{Cubic polyhedral graphs}

We first explain the simplest case of our Lombardi drawing algorithm, in which the planar graph to be drawn is 3-connected and 3-regular.
We will later simplify the algorithm, but in terms of three-dimensional hyperbolic geometry, our drawings may be constructed by the following steps:
\begin{itemize}
\item Construct the dual graph of the given input graph, a maximal planar graph, and its (unique) planar embedding.
\item Apply the Collins--Stephenson procedure to realize the dual maximal planar graph as the intersection graph of a collection of tangent circles $C_i$.
\item Use quasiconvex programming to find a M\"obius transformation of the circles taking them to a configuration in which one circle $C_0$ is exterior to all the others, and maximizing the minimum radius of the internal circles.
\item Let the plane on which the circles are packed be the boundary of a three-dimensional halfspace model for hyperbolic space. Each circle $C_i$ forms the set of points at infinity for a hyperbolic plane $H_i$ within this space, which in the halfspace model is represented as a hemisphere. With this model, isometries of hyperbolic space correspond to M\"obius transformations of the plane and vice versa.
\item Construct the three-dimensional hyperbolic Voronoi diagram of the hyperbolic planes $H_i$, a partition of hyperbolic space into cells within which every point is closer to $H_i$ (as measured using hyperbolic distance) than to any other one of these planes. The bisector between two hyperbolic planes (the set of points equidistant from both of them) is itself a hyperbolic plane. The cell for $H_i$ is a hyperbolic convex polyhedron, the intersection of hyperbolic halfspaces bounded by these bisectors.
\item Compute the intersection of the Voronoi diagram boundaries with the plane at infinity
(the original Euclidean plane on which we drew the circle packing). The bisectors of the Voronoi diagram form circular arcs in this plane. Triples of bisectors meet at Voronoi edges, which intersect the plane in vertices at which three circular arcs meet. The Voronoi cell for a hyperplane $H_i$ meets the plane at infinity in a two-dimensional region, containing circle $C_i$ (or contained in it for $i=0$) and bounded by circular arcs and vertices. Therefore, the arcs and vertices form a drawing of the dual graph to the intersection graph of circles, which is our initial 3-regular graph.\end{itemize}

\begin{wrapfigure}[15]{r}{0.37\textwidth}
\vskip-4.5ex
\centering\includegraphics[width=.37\textwidth]{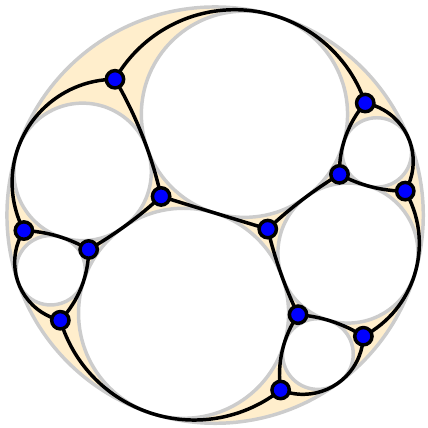}
\vskip-1ex
\caption{Lombardi drawing of the Frucht graph derived from the circle packing of Figure~\ref{fig:optmob}.}
\label{fig:frucht2}
\end{wrapfigure}
To verify that this process produces Lombardi drawings, we must show that the three arcs meeting at each vertex form $120^\circ$ angles. For any three mutually tangent circles of the circle packing, there exists a M\"obius transformation taking their tangencies to an equilateral triangle. In the transformed packing, the bisectors between the three circles are lines, the axes of symmetry of the equilateral triangle, which meet at $120^\circ$ angles. Because M\"obius transformations are
conformal mappings,  the three curves in our un-transformed drawing  have the same $120^\circ$ angles at the vertex where they meet.

Three-dimensional Voronoi diagrams may have quadratic complexity, but  we can shortcut this  bound by constructing our drawing directly from the circle packing.

\begin{theorem}
\label{thm:3con}
If we are given as input an $n$-vertex 3-regular 3-connected planar graph, we can produce a planar Lombardi drawing for the graph in time $T+O(n)$, where $T$ denotes the time needed to find and optimally transform a circle packing dual to the graph.
\end{theorem}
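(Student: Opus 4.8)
The plan is to avoid building the three-dimensional hyperbolic Voronoi diagram explicitly, since that diagram could have quadratic complexity, and instead to read each vertex and each edge of the drawing off the local structure of the packing in constant time apiece. First I record the bookkeeping, noting that constructing the dual graph and its unique planar embedding takes $O(n)$ time. The dual is a maximal planar graph, so if the input has $n$ vertices then by Euler's formula it has $3n/2$ edges and $n/2+2$ faces; hence the packing has $n/2+2$ circles, $3n/2$ tangencies, and $n$ triangular faces, each the meeting of three mutually tangent circles. The vertices of the drawing are in bijection with the triangular faces, its edges with the tangencies, and its faces with the circles, as described above. All three counts are $O(n)$, so it suffices to construct each vertex and each arc in $O(1)$ time.

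For the vertices, I would identify the Voronoi vertex inside the curvilinear interstice of a triangular face with a triangle center of the triangle of its three tangency points. Indeed, the equilateral-normalization argument already given takes the three bisectors to the axes of symmetry of an equilateral triangle, which cross at its center; pulling this center back through the normalizing M\"obius transformation produces, by the definition recalled in the triangle-center preliminaries, the first isodynamic point of the tangency triangle. Each Voronoi vertex is therefore a triangle center computable in constant time from barycentric coordinates. For the edges, the bisector of two tangent circles is a circle through their point of tangency (in the equilateral picture it is the symmetry axis through the corresponding corner), and it passes through the isodynamic points of both triangular faces flanking the tangency. Hence each edge arc is the arc of the circle determined by three points available in $O(1)$ time---the two isodynamic points and the tangency point between them---so it, too, can be drawn in constant time.

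It remains to check that assembling these local pieces yields exactly the Voronoi-diagram drawing, and in particular a planar one. The curvilinear interstices tile the region bounded by the outer circle $C_0$, each interstice carrying the three tangency points of its triple as its corners and meeting each neighboring interstice in exactly one such corner; placing one vertex per interstice and running the three arcs from it out through its three corners therefore reproduces the $3$-regular dual graph with no crossings. The $120^\circ$ angular resolution at every vertex is inherited, through conformality, from the equal angles of the equilateral picture, so the result is a planar Lombardi drawing. The crux---and the step that licenses skipping the potentially quadratic Voronoi computation---is precisely this reconciliation: one must verify that only tangent pairs of circles contribute Voronoi boundaries, so that no non-adjacent pair creates an extra bisector face that would change the combinatorial type, and that the single bisector of each tangent pair is exactly the circle through those three points. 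Granting this, the drawing is produced in time $T$ to find and optimally transform the packing, plus $O(n)$ for the $O(n)$ constant-time vertex and arc constructions, giving the claimed bound $T+O(n)$.
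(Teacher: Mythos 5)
Your proposal is correct and follows essentially the same route as the paper's own (much terser) proof: each drawing vertex is the first isodynamic point of the triangle of tangency points in its cusp, computed in constant time from barycentric coordinates, and each arc is the circle determined by its two isodynamic-point endpoints together with the tangency point it passes through, giving $O(n)$ work beyond the packing computation. The additional verifications you flag (that only tangent circles contribute Voronoi boundaries, and that the local pieces assemble into a planar drawing) are handled in the paper by the Voronoi-diagram discussion preceding the theorem rather than inside the proof itself.
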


\begin{proof}
Each vertex of the drawing lies within the cusp formed by three mutually tangent circles.
By M\"obius invariance, it is the isodynamic point of the triangle formed by the three tangency points of its cusp. We may calculate its position as weighted average of the three tangency points, using the known barycentric coordinates of the isodynamic point as weights. Each circular arc of the drawing has two vertices as its endpoints and passes through one tangency point, which together determine its location.
\end{proof}

\section{Two-connected subcubic graphs}

We next describe how to extend our Lombardi drawing technique to 2-connected graphs. A 2-connected graph may have vertices of degree two or three, but the degree-2 vertices may be suppressed, forming a 3-regular multigraph with the same connectivity. If the multigraph has a planar Lombardi drawing, so does the original graph with the degree-2 vertices, as the vertices may be restored by subdividing edges without changing the Lombardi property. Therefore, for the most part within this section we assume that the given graph remains 3-regular.

We use a standard tool for decomposing 2-connected graphs, the SPQR tree~\cite{DiBTam-FOCS-89,DiBTam-ICALP-90,GutMut-GD-01,HopTar-SJC-73,Mac-DMJ-37}. An SPQR tree for a graph~$G$ is a tree structure in which each tree node is associated with a graph $C_i$, known as a 3-connected component of $G$. In each 3-connected component, some of the edges are labeled as ``virtual'', and each edge of the SPQR-tree is labeled by a pair of oriented virtual edges from its two endpoints; each virtual edge of a component is associated in this way with exactly one tree edge. The given graph $G$ may be formed by gluing the components $C_i$ together, by identifying pairs of endpoints of virtual edges according to the labeling of the tree edges, and then deleting the virtual edges themselves. The nodes of an SPQR tree have three types: R nodes, in which the associated graph is 3-connected, S nodes, in which the associated graph is a cycle, and P nodes, in which the associated graph is a bond graph, a multigraph with two vertices and three or more parallel edges. With the additional constraint that no two S nodes and no two P nodes may be adjacent, the SPQR tree is uniquely determined from $G$, and may be constructed in linear time.

The SPQR trees of 3-regular graphs have an additional structure that will be helpful for us:

\begin{lemma}[Pootheri~\cite{Poo-SE-01}, Eppstein and Mumford~\cite{EppMum-SoCG-10}]
\label{lem:spqr3}
A 2-connected graph $G$ is 3-regular if and only if each edge in its SPQR tree has exactly one S node as an endpoint, each S node is associated with an even cycle that alternates between virtual and non-virtual edges, each P node is associated with a three-edge bond graph, and each R node is associated with a graph that is itself 3-regular.
\end{lemma}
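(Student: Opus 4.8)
The plan is to reduce everything to a single degree-counting identity and then read off all four conditions from it. For a vertex $v$ of $G$, let $T_v$ denote the set of SPQR-tree nodes whose skeleton contains $v$; this is a connected subtree, a standard property of SPQR trees that I would cite rather than reprove. Since gluing the skeletons deletes all virtual edges, the edges of $G$ at $v$ are exactly the non-virtual edges incident to the copies of $v$ across the components of $T_v$, and these live in distinct components. Writing $\deg^{skel}_C(v)$ for the degree of $v$ in the skeleton of $C$, and noting that the number of virtual edges at $v$ in $C$ is precisely the degree of $C$ inside $T_v$, I obtain
\[
\deg_G(v)=\sum_{C\in T_v}\deg^{skel}_C(v)-2\bigl(|T_v|-1\bigr),
\]
because the degrees in the tree $T_v$ sum to twice its edge count $|T_v|-1$. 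This identity is the engine of the whole proof.

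For the forward direction I assume $G$ is $3$-regular and substitute $\deg_G(v)=3$, which gives $\sum_{C\in T_v}\deg^{skel}_C(v)=2|T_v|+1$. Every skeleton degree here is at least $2$: exactly $2$ in an S-cycle, at least $3$ at a P-pole (a canonical P-node is a bond with at least three edges), and at least $3$ in a $3$-connected R-skeleton. Hence the total excess over $2|T_v|$ is exactly $1$, and since only P- and R-nodes can contribute positive excess, each vertex must lie in exactly one P/R-node, with skeleton degree exactly $3$ there, and in some number of S-nodes where its degree is $2$. Conditions~3 and~4 are then immediate: a P-pole's degree equals the number of parallel edges, forcing that number to be $3$, and every vertex of an R-skeleton has degree $3$, so that skeleton is $3$-regular. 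For condition~1, if a tree edge joined two P/R-nodes its split pair would sit in two P/R-nodes, contradicting uniqueness; combined with the standard no-two-S-adjacent rule this yields exactly one S endpoint per edge. For condition~2, a cycle vertex with two non-virtual edges would have $G$-degree $2$, and one with two virtual edges would lie in its two (necessarily non-S) neighbours, hence in two P/R-nodes; both are impossible, so every cycle vertex carries exactly one virtual and one non-virtual edge, forcing the alternation and thus even length.

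For the converse I run the identity backwards. Alternation (condition~2) gives each cycle vertex exactly one virtual edge, so in $T_v$ every S-node is a leaf; condition~1 forbids adjacent P/R-nodes, so $T_v$ contains at most one P/R-node, and it cannot consist solely of S-nodes (a lone S-node still has a virtual edge, and an S--S edge is forbidden). Thus $T_v$ is a star centered at a single P/R-node with S-node leaves. Each S-leaf contributes one non-virtual edge at $v$ and the center contributes its skeleton degree, which is $3$ by condition~3 or~4; substituting into the identity (the leaf contributions cancel exactly against the $-2(|T_v|-1)$ term) gives $\deg_G(v)=3$.

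The main obstacle is stating the degree identity correctly and then exploiting how little slack it leaves: once one knows $\sum_C\deg^{skel}_C(v)=2|T_v|+1$ with every summand at least $2$, the single unit of slack is exactly what pins each vertex to a unique P/R-node, and everything else becomes bookkeeping. The subtler supporting facts---that the allocation nodes of $v$ form a subtree, that canonical P-nodes are bonds with at least three edges, and that an S-node never neighbours an S-node---I would invoke as known properties of the canonical SPQR tree.
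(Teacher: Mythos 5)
Your proof is correct, but note that the paper itself does not prove this lemma at all: it is stated as a known result, attributed to Pootheri and to Eppstein and Mumford, and used as a black box. So there is no in-paper argument to compare against; what you have written is a self-contained replacement for the citation. Your central device, the identity $\deg_G(v)=\sum_{C\in T_v}\deg^{skel}_C(v)-2(|T_v|-1)$ over the allocation subtree $T_v$, is sound: each virtual edge at $v$ in a skeleton $C$ corresponds to a tree edge from $C$ to another node containing $v$ (since adjacent skeletons share exactly their split pair), so the virtual-edge count at $v$ telescopes to twice the edge count of $T_v$. The ``one unit of slack'' argument then cleanly forces a unique P/R node of skeleton degree exactly $3$ per vertex, from which conditions 3 and 4 are immediate and conditions 1 and 2 follow by the adjacency and alternation arguments you give. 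The converse direction is also handled correctly, including the edge cases (a lone S node is impossible for a vertex's allocation tree because its single virtual edge would force a neighbour, and two S nodes cannot be adjacent). The supporting facts you invoke without proof---connectivity of the allocation subtree, canonical P nodes being bonds with at least three edges, and the no-two-adjacent-S/no-two-adjacent-P normalization---are indeed standard properties of the canonical SPQR tree and are consistent with the definition the paper gives, so deferring them is appropriate.
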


\ifFull
\begin{lemma}
\label{lem:grow-virtual}
Let $G$ be a graph with a planar Lombardi drawing, and let $e$ be any edge of $G$. Then, for any $\epsilon>0$ there is a planar Lombardi drawing of $G$ in which $e$ lies on the outer face of the drawing and forms a circular arc that covers an angle of its circle greater than or equal to  $2\pi(1-\epsilon)$.
\end{lemma}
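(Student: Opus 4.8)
The plan is to realize the required drawing as a M\"obius image of the given one, using the fact established above that M\"obius transformations preserve both planarity and the Lombardi property as long as no vertex or edge is sent to $\infty$. In the given drawing the edge $e$ is a circular arc lying on a circle $C$, with endpoints at its two vertices $u$ and $v$; it borders exactly two faces, and I pick one of them, $F_1$, to play the role of the outer face. I fix an interior point $m$ of the arc $e$ (for concreteness, its midpoint), so that $m$ lies on $C$ but differs from both $u$ and $v$.

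The construction is to send a point to infinity from just inside $F_1$. Concretely, choose a point $p^\ast$ in the interior of $F_1$ lying close to $m$, and let $M$ be a M\"obius transformation having a pole at $p^\ast$, so that $M(p^\ast)=\infty$. Because $p^\ast$ lies in the interior of a face, it is not on any vertex or edge, so $M$ sends no vertex or edge to $\infty$, and $M$ applied to the drawing is again a planar Lombardi drawing of $G$. Since the face $F_1$ now contains the point $\infty$, it becomes the outer (unbounded) face of the transformed drawing, and $e$ lies on its boundary; this yields the first required property. (The central angle I compute below does not depend on how $M$ is normalized, since any two M\"obius maps with the same pole differ by post-composition with a Euclidean similarity, which preserves central angles.)

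It remains to control the central angle of the image arc $M(e)$, and this is the heart of the argument. As $p^\ast$ approaches the point $m\in C$, the distance from $p^\ast$ to $C$ tends to $0$, so the image circle $C'=M(C)$ has radius tending to infinity (in the limit, when $p^\ast$ reaches $C$, the image of $C$ degenerates to a straight line). The complementary arc $\bar e = C\setminus e$ avoids a fixed neighborhood of $m$, so on $\bar e$ the maps $M$ converge to the limiting transformation sending $m$ to $\infty$; hence $M(\bar e)$ converges to a fixed bounded arc, of length at most some constant $L$. On a circle $C'$ of radius $r$ this arc subtends a central angle at most $L/r$, which tends to $0$. Since the edge arc $e$ passes through $m$ while $\bar e$ does not, $M(e)$ is precisely the complementary, major arc on $C'$, and it therefore subtends central angle at least $2\pi - L/r$. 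Choosing $p^\ast$ sufficiently close to $m$ makes this at least $2\pi(1-\epsilon)$, as required.

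The main obstacle is making the limiting estimate of the last paragraph precise: one must confirm both that the radius of $C'$ genuinely blows up and that the image of $\bar e$ remains within a bounded region of bounded length, so that its angular contribution vanishes. The orientation issue---that it is the edge's own arc, and not its complement, that grows to fill the circle---is resolved by the observation that the point $m$ driven out toward infinity lies on $e$ rather than on $\bar e$.
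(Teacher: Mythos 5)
Your proposal is correct and follows essentially the same route as the paper, whose entire proof is the single sentence ``Transform the given drawing by an inversion whose center is near but not on the midpoint of $e$''; you have simply spelled out why that inversion works. The details you supply---choosing the pole inside an adjacent face so that no vertex or edge is sent to $\infty$ and that face becomes the outer one, and observing that the complementary arc stays bounded while the image circle's radius blows up, so its angular contribution vanishes---are the right ones and fill in the paper's elision accurately.
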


\begin{proof}
Transform the given drawing by an inversion whose center is near but not on the midpoint of~$e$.
\end{proof}
\fi

\begin{figure}[t]
\centering\includegraphics[height=1in]{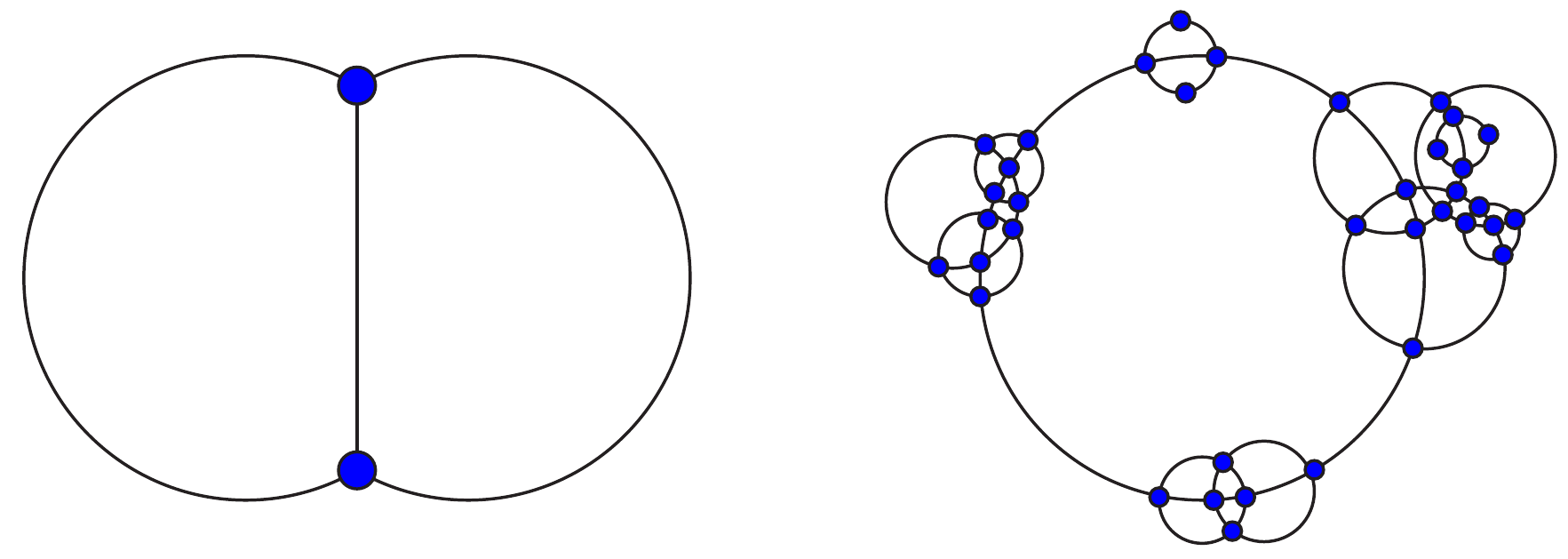}
\caption{Left: planar Lombardi drawing for a 3-edge P node. Right: gluing multiple Lombardi drawings together on an S node with alternating virtual and non-virtual edges (schematic view, not an actual Lombardi drawing).}
\label{fig:spqr-glue}
\end{figure}

\begin{theorem}
\label{thm:2con}
If we are given as input an $n$-vertex 2-connected planar graph with maximum degree~3, we can produce a planar Lombardi drawing for the graph in time $T+O(n)$, where $T$ denotes the time needed to find and optimally transform a family of circle packings with total cardinality $O(n)$.
\end{theorem}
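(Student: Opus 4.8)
The plan is to reduce to the case already solved by Theorem~\ref{thm:3con} and then assemble the pieces with M\"obius transformations. First I would suppress every degree-two vertex, producing a $3$-regular planar multigraph $G'$ of the same connectivity (a graph that is a single cycle is simply drawn as a circle). Because subdividing a circular arc at an interior point splits it into two arcs of the same circle that meet at angle $\pi$, the perfect angular resolution for a degree-two vertex, any planar Lombardi drawing of $G'$ yields one of the original graph once the suppressed vertices are reinserted. I would then build the SPQR tree of $G'$ and apply Lemma~\ref{lem:spqr3}: each tree edge has exactly one S-node endpoint, so the tree is bipartite with S-nodes on one side and R- and P-nodes on the other; every R-node is $3$-regular and $3$-connected, every P-node is a three-edge bond, and every S-node is an even cycle alternating virtual and non-virtual edges. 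Each R-node I would draw directly by Theorem~\ref{thm:3con}, and each P-node by the explicit three-arc drawing whose arcs meet at $120^\circ$ at the two shared endpoints (Figure~\ref{fig:spqr-glue}, left).

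The drawings are then glued bottom-up along the tree, one S-node at a time (Figure~\ref{fig:spqr-glue}, right). Consider an S-node whose cycle attaches components $D_1,\dots,D_k$ at its virtual edges, with genuine edges of $G'$ between them. At a cycle vertex $w$ shared by a component $D_i$ and a non-virtual edge $N$, the vertex has degree two in the cycle, so after the virtual edge is deleted $w$ carries $N$ together with the two real edges of $D_i$ that previously met $D_i$'s virtual edge at $120^\circ$. The drawing is therefore Lombardi at $w$ precisely when $N$ leaves $w$ in the direction of the deleted virtual edge, i.e.\ when $N$ is tangent at $w$ to the circle carrying that virtual edge. To prepare for this I would, for each $D_i$, apply an inversion centred just off the midpoint of the relevant virtual edge, so that the edge lands on the outer face and covers an arc of its circle of measure at least $2\pi(1-\epsilon)$; the component then occupies an arbitrarily small, almost-circular disk, and its two stub edges leave the small gap pointing nearly oppositely along the common tangent line. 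Placing these ``beads'' around a large circle $\Gamma$, tangent to $\Gamma$ at their gaps, makes all stub directions point essentially along $\Gamma$, so each non-virtual edge can be routed as a short arc closely following $\Gamma$ between consecutive beads.

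The hard part will be enforcing these angles \emph{exactly} rather than approximately. Each non-virtual edge must be a single circular arc that is tangent to $D_i$'s virtual-edge circle at one end and to $D_{i+1}$'s at the other; but the circle through two points with a prescribed tangent at one endpoint has its tangent at the other endpoint determined, so the two tangency conditions cannot be imposed independently. I expect to resolve this by spending the leftover freedom in the M\"obius placements of the beads: as $\epsilon\to 0$ each bead collapses to a point of $\Gamma$ whose stub directions become the tangent of $\Gamma$, and in that limit every connecting arc is literally an arc of $\Gamma$ and all $k$ tangencies hold exactly, so for small $\epsilon$ the system is a solvable perturbation of this configuration, solved while keeping the beads in disjoint neighbourhoods of $\Gamma$ to preserve planarity. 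Iterating over the bipartite tree merges everything into a single planar Lombardi drawing of $G'$, and subdividing recovers the input. For the running time, the R-nodes have total size $O(n)$, so everything beyond drawing them---the SPQR construction, the P-node drawings, and $O(1)$ arc and angle computations per glued vertex---costs $O(n)$, while the term $T$ absorbs finding and optimally transforming the family of circle packings dual to the R-nodes, whose total cardinality is $O(n)$.
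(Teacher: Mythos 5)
Your overall strategy matches the paper's: suppress degree-two vertices, take the SPQR tree with the structure of Lemma~\ref{lem:spqr3}, draw the R~nodes by Theorem~\ref{thm:3con} and the P~nodes explicitly, turn each component into a small ``bead'' by an inversion centered near (but not on) the midpoint of its virtual edge, and assemble the beads around a common circle at each S~node. The running-time accounting is also correct.

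The gap is in the gluing step, and it is one you create for yourself. You place each bead so that its virtual-edge circle is \emph{tangent} to the common circle $\Gamma$, which makes the stub directions only approximately aligned with $\Gamma$; you then correctly observe that a single circular arc cannot independently satisfy prescribed tangent directions at both of its endpoints, and you fall back on an unproven claim that the resulting system of $k$ tangency conditions is ``a solvable perturbation'' of the degenerate $\epsilon=0$ configuration. Nothing in the proposal justifies that claim---one would need at least an implicit-function-type argument plus a check that the perturbed solution can be kept planar---and it is the crux of the whole theorem. The paper sidesteps the issue entirely: instead of making the virtual-edge circles tangent to $\Gamma$, apply to each component a M\"obius transformation carrying its virtual-edge circle \emph{onto} $\Gamma$, so that the (to-be-deleted) virtual edge of component $D_i$ becomes the long arc of $\Gamma$ complementary to a short arc $B_i$, with the rest of $D_i$ confined to a small disk around $B_i$. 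Then at each attachment vertex the two real edges of the component make exactly $120^\circ$ with the tangent of $\Gamma$, the deleted virtual edge departed along $\Gamma$ away from $B_i$, and each non-virtual cycle edge can be drawn as the arc of $\Gamma$ between consecutive beads, which departs each endpoint in precisely the same direction the virtual edge did. All angle conditions hold exactly, no limiting argument is needed, and planarity follows by choosing the arcs $B_i$ short enough that the beads' enclosing disks are pairwise disjoint. With that one change your argument goes through; as written, the key step is asserted rather than proved.
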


\begin{proof}
Suppress the degree-two vertices of the given graph to produce a 3-regular 2-connected graph, decompose it into an SPQR tree with the additional structure of Lemma~\ref{lem:spqr3}, and use Theorem~\ref{thm:3con} to construct a planar Lombardi drawing of the 3-connected component associated with each $R$ node (Figure~\ref{fig:spqr-glue}, left). In each S node of the tree, glue the drawings from the adjacent tree nodes together by using
\ifFull
Lemma~\ref{lem:grow-virtual}
\else
inversions centered near their virtual edges
\fi
to expand the virtual edges, move these edges to the outer face of their drawings, and shrink the rest of the drawings, and then aligning the circular arcs representing their virtual edges so that they all lie on a common circle (Figure~\ref{fig:spqr-glue}, right). Finally, subdivide the edges of the drawing as necessary to restore the suppressed degree-two vertices.
\end{proof}

\section{Subcubic graphs with bridges}

\begin{wrapfigure}[12]{r}{0.5\textwidth}
\vskip-5ex
\centering\includegraphics[width=0.5\textwidth]{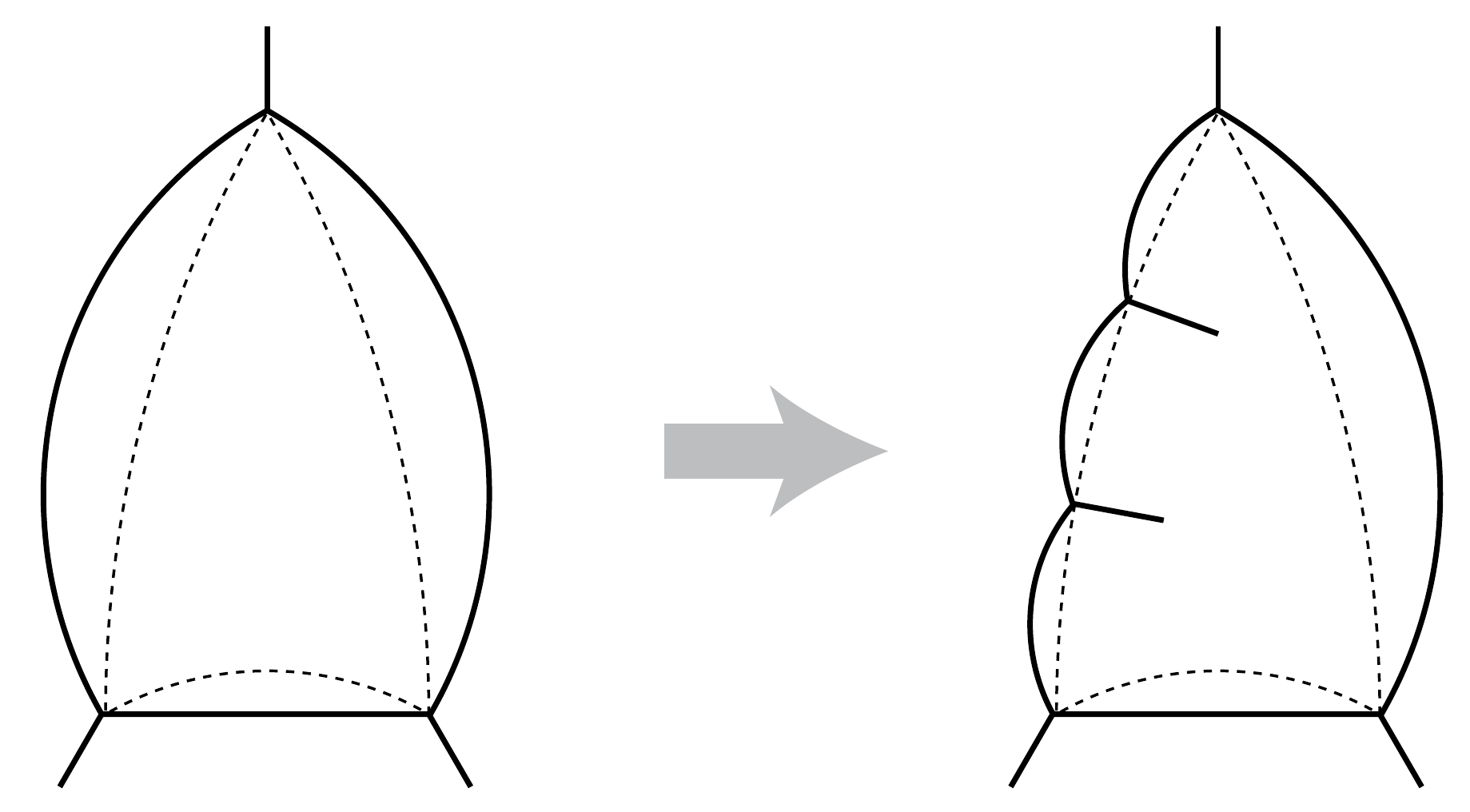}
\vskip-1ex
\caption{Modifying a 3-connected Lombardi drawing to attach bridges to subdivision points along an edge.}
\label{fig:make-bridges}
\end{wrapfigure}
We are finally ready to describe our algorithm for constructing planar Lombardi drawings for arbitrary subcubic planar graphs, possibly including bridge edges.

Our algorithm begins by deleting all the bridge edges from the graph, and suppressing all degree two vertices, leaving a collection of isolated vertices and 2-connected 3-regular subgraphs. It then constructs the SPQR tree of each 2-connected subgraph, forming its 3-connected components, and applies Theorem~\ref{thm:3con} to construct a planar Lombardi drawing for each R node in each SPQR tree.

Within each face of each of these drawings, form a sequence of circular arcs forming $30^\circ$ angles to the arcs of the face and $60^\circ$ angles to each other, forming a smaller inset polygon (Figure~\ref{fig:make-bridges}, left). These arcs cannot intersect
each other -- this is straightforward to see within each triangular cusp of the circle packing
from which the drawing was constructed, and within each circle of the circle packing (viewing the circle as a disk model for the hyperbolic plane) the new arcs are confined to disjoint halfspaces.
For each edge $e$ of one of these
drawings that corresponds to a path of suppressed degree-two vertices some of which were bridge endpoints, let $k$ be the number of bridge endpoints of $e$. Choose arbitrarily one or the other of the two circular arcs $A$ forming $30^\circ$ angles to $e$, and replace $e$ by a sequence of $k+1$ circular arcs, all of which form the same $30^\circ$ angle to $A$. These arcs meet each other at $120^\circ$ angles, and we may extend a bridge from each of the vertices formed in this way to a new degree-one vertex (Figure~\ref{fig:make-bridges}, right). By a similar construction we unsuppress and attach bridge edges to the suppressed bridge endpoints within the P nodes and S nodes of the SPQR tree.

Next, we glue the components within each SPQR tree together, as in Theorem~\ref{thm:2con}. At the same time, for each isolated vertex created by the bridge deletion step, we create a drawing of a claw $K_{1,3}$ consisting of three unit-length line segments meeting at $120^\circ$ angles. Additionally, we unsuppress the remaining degree-two vertices of the original graph, by subdividing edges of these drawings. After this step, we have separate drawings for each block (biconnected component) of the original graph, in which each bridge incident to the block has a degree-one vertex at its other end.

Finally, for each two blocks that should be connected by a bridge, we apply an inversion centered on the degree-one endpoint of the bridge, causing the bridge edge to become an infinite ray exterior to the rest of the drawing. After this transformation the two blocks may be glued together by making their two copies of the bridge edge lie on the same line.

This completes the proof of our main result:

\begin{theorem}
\label{thm:1con}
If we are given as input an $n$-vertex planar graph with maximum degree~3, we can produce a planar Lombardi drawing for the graph in time $T+O(n)$, where $T$ denotes the time needed to find and optimally transform a family of circle packings with total cardinality $O(n)$.
\end{theorem}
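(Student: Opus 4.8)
The plan is to reduce the general subcubic case to the building blocks already handled by Theorems~\ref{thm:3con} and~\ref{thm:2con}, and then to stitch the pieces together with M\"obius transformations in a way that preserves both planarity and perfect angular resolution. First I would strip the graph down to parts we can already draw: delete every bridge and suppress every remaining degree-two vertex, which leaves a disjoint collection of isolated vertices together with 2-connected 3-regular multigraphs. Each 2-connected piece is drawn exactly as in Theorem~\ref{thm:2con}, by forming its SPQR tree, drawing each R node via Theorem~\ref{thm:3con}, and gluing the components across S nodes. This accounts for all of the nontrivial topology; what remains is to put the bridges and the suppressed low-degree vertices back without disturbing the drawings.

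Second I would restore the bridges. The difficulty is that a bridge attaches to a vertex interior to a block and must leave it at a $120^\circ$ angle to the two incident block edges, yet the block drawing leaves no room at those vertices. To create room, within each face of a block drawing I inset a smaller polygon whose sides are circular arcs meeting the surrounding edges at $30^\circ$ and meeting one another at $60^\circ$; I would then replace an edge $e$ carrying $k$ bridge endpoints by $k+1$ arcs that all make the same $30^\circ$ angle to the chosen inset arc $A$, so that consecutive arcs meet at $120^\circ$ and a bridge can be sent off from each junction at the required angle. Suppressed degree-two vertices are reinstated simply by subdividing arcs, which never affects the Lombardi property, and each isolated vertex is drawn as a claw of three unit segments meeting at $120^\circ$.

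Finally I would glue across the bridges themselves. Each block drawing has been arranged so that every pendant bridge terminates at a degree-one vertex; applying an inversion centered at that vertex turns the bridge into an infinite ray lying in the exterior of the drawing, and two blocks joined by a bridge can then be merged by placing their two rays on a common line. Because inversions are M\"obius transformations they preserve circularity and angles, so the glued figure is again a Lombardi drawing. The running time is $T+O(n)$: all packings invoked have total cardinality $O(n)$, and every combinatorial, inset, subdivision, and gluing operation is performed a linear number of times at constant cost each.

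I expect the main obstacle to be global planarity rather than the local angle bookkeeping. The $120^\circ$ conditions are built into the construction, but one must verify that the newly introduced inset and bridge arcs never cross existing edges or each other, and that successive inversion-and-align gluings keep distinct blocks in disjoint regions of the plane. The cleanest way I see to control the inset arcs is to work inside the circle packing underlying each block: within each triangular cusp the new arcs are manifestly disjoint, and within each packing circle---read as a disk model of the hyperbolic plane---the arcs emanating from an edge are confined to disjoint hyperbolic halfplanes, which rules out crossings. For the bridge gluings, the fact that an inversion centered at a pendant vertex exiles the bridge to the unbounded face lets each merge take place in otherwise empty exterior space, so disjointness of the blocks can be maintained inductively.
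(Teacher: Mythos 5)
Your proposal follows the paper's proof essentially step for step: the same bridge-deletion and degree-two-suppression decomposition, the same $30^\circ$/$60^\circ$ inset arcs creating room to reattach $k$ bridges along an edge via $k+1$ arcs meeting at $120^\circ$, the same claw drawings for isolated vertices, the same inversion-to-an-exterior-ray trick for gluing blocks across bridges, and even the same cusp/hyperbolic-halfplane argument for disjointness of the inset arcs. The only (immaterial) deviation is ordering: the paper performs the inset construction within each R-node drawing \emph{before} gluing the SPQR components---which is where the circle-packing disjointness argument literally applies---whereas you glue each block first and then inset.
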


\section{Four-regular graphs}

\begin{figure}[t]
\centering\includegraphics[height=1.25in]{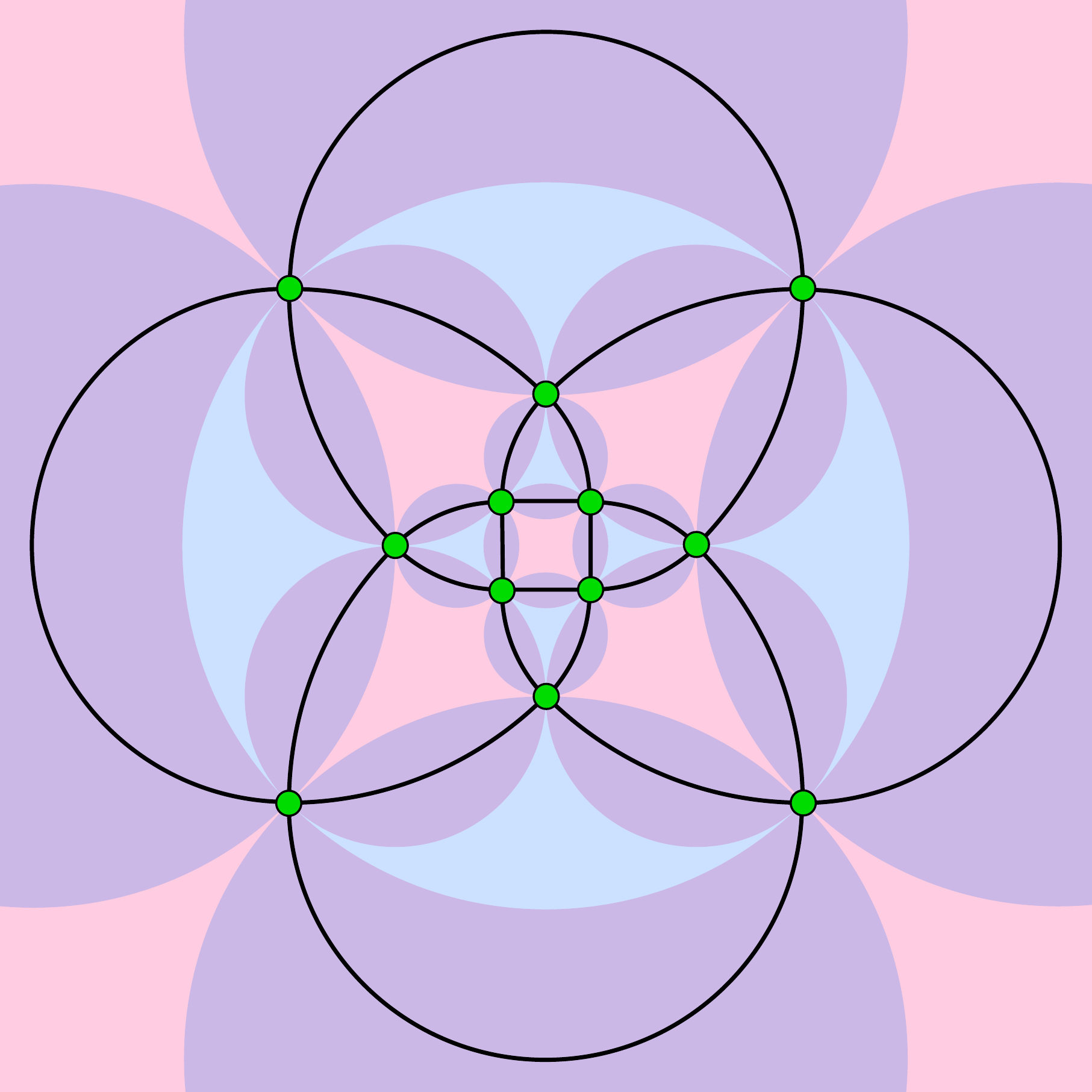}\qquad
\includegraphics[height=1.25in]{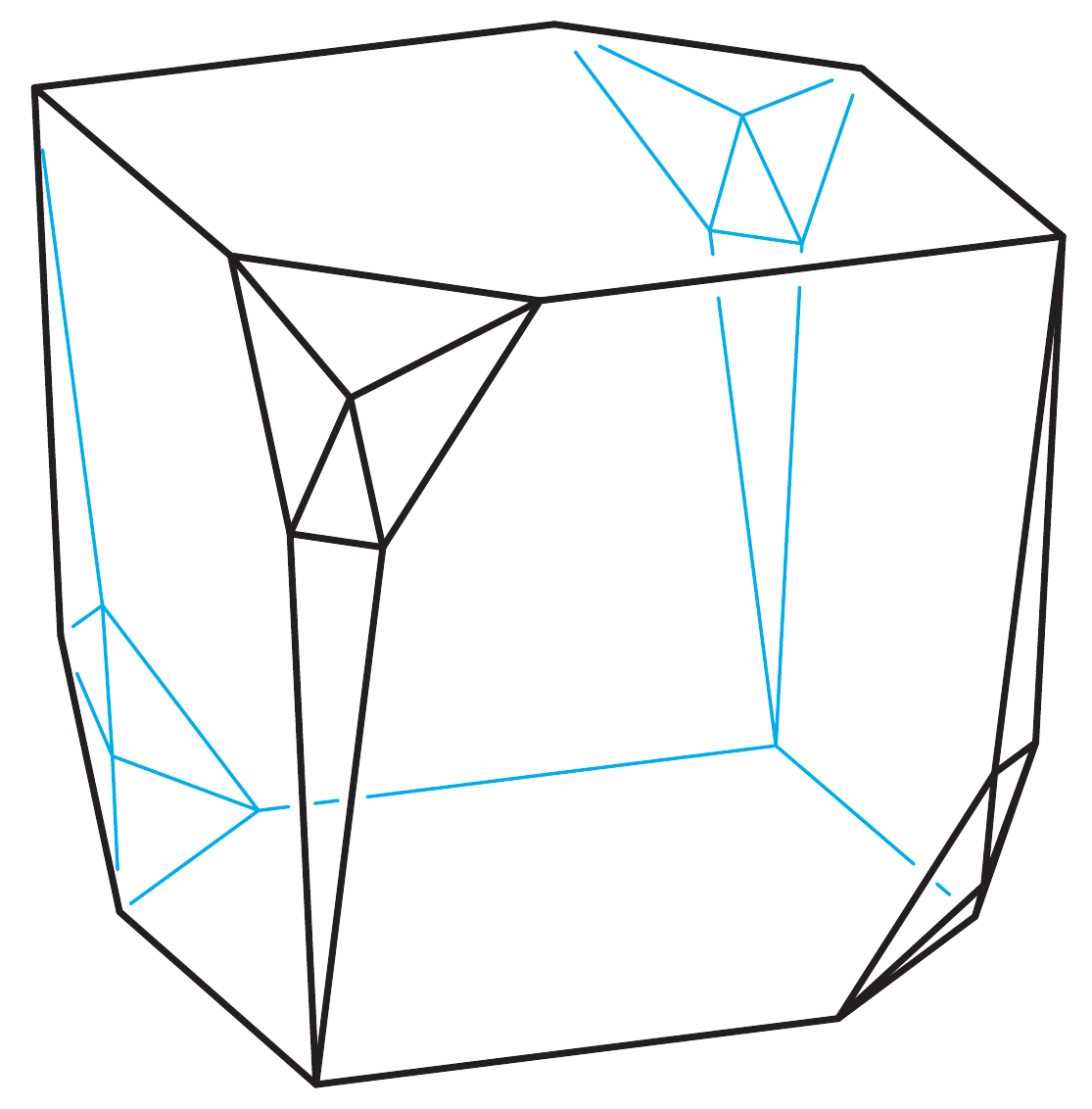}\quad
\includegraphics[height=1.25in]{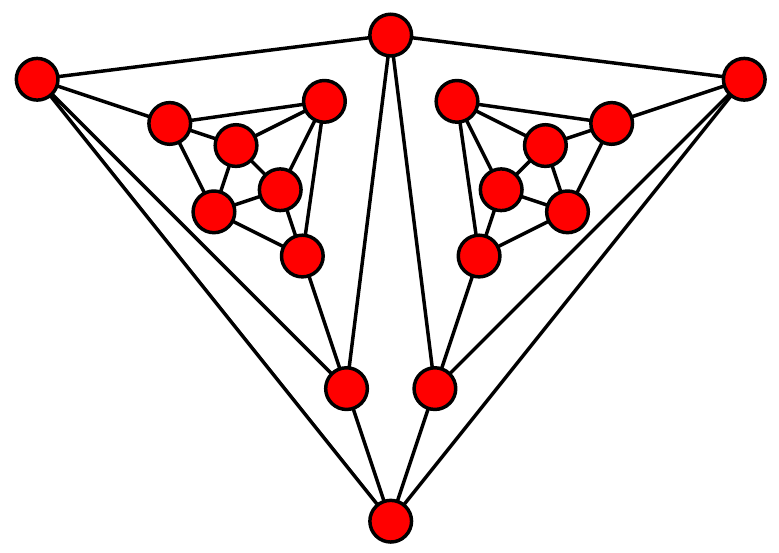}
\caption{Left: Dual circle packings for a cube and octahedron, with a Lombardi drawing of their medial graph, the cuboctahedron. Center: A 4-regular polyhedron that is the medial graph of a non-polyhedral graph, described by Dillencourt and Eppstein~\cite{EppDil-EGM-03}. Right: A 4-regular planar graph with no Lombardi drawing.}
\label{fig:medial}
\end{figure}

An alternative form of the circle packing theorem states every 3-connected planar graph $G$ and its dual can be simultaneously represented by tangent circles, so that two circles from the two packings are orthogonal if they represent a vertex $v$ of $G$ and a face of the $G$ containing $v$, and disjoint otherwise. The packing is unique up to M\"obius transformation, and may be found by a numerical procedure similar to the one for circle packing of maximal planar graphs~\cite{Moh-DM-93}.

These dual packings allow us to construct a planar Lombardi drawing of the \emph{medial graph} of $G$, the 4-regular graph formed by placing a vertex on the midpoint of each edge of $G$ and connecting two of these new vertices by an edge whenever they are the midpoints of consecutive edges on the same face. Each pair of orthogonal circles in the packing and dual packing of $G$ form a lune, which may be bisected by a circular arc forming a $45^\circ$ angle with both circles. Two circular arcs meeting at the same point of tangency form right angles or $180^\circ$ angles to each other, so the collecting of bisecting arcs forms a drawing of the graph with one vertex for each point of tangency and one edge for each pair of orthogonal circles, which is the medial graph of $G$ (Figure~\ref{fig:medial}, left). As before, this drawing has an interpretation as the intersection of a three-dimensional hyperbolic Voronoi diagram with the plane at infinity, and as before it can be extended to certain 2-connected graphs using the SPQR tree. However, this technique does not apply to 3-connected 4-regular planar graphs that are not medial graphs of polyhedral graphs, such as the one shown in Figure~\ref{fig:medial}, center.

The 4-regular 2-connected planar graph in Figure~\ref{fig:medial}, right has no Lombardi drawing. We defer the proof to an appendix.

\section{Implementation}

\begin{figure}[p]
  \centering
  \subfloat[Markstr\"om's 24-vertex graph with no 4- or 8-cycles~\cite{Mar-CN-04}]{\includegraphics[width=0.3\textwidth]{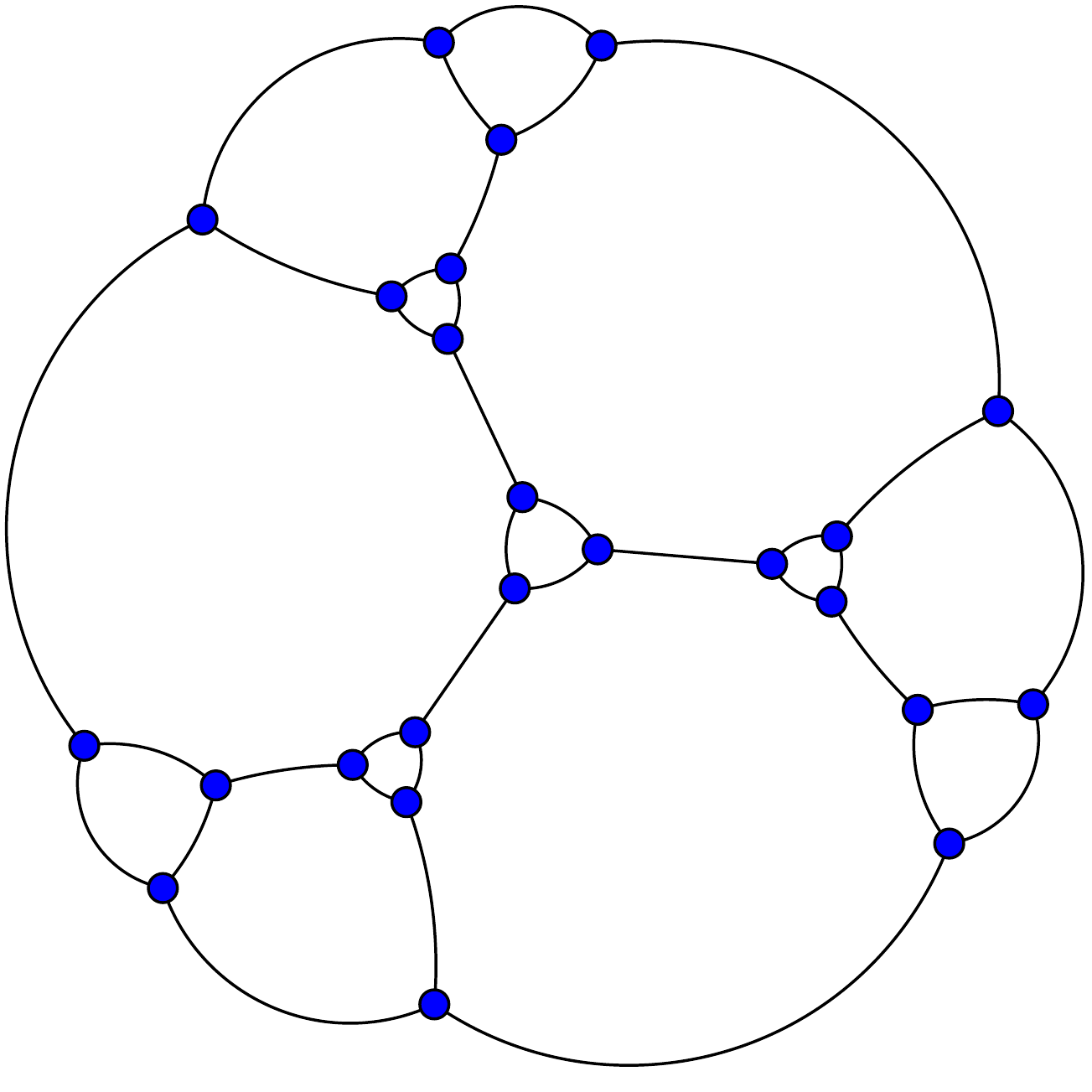}}
  \hfil  \subfloat[Do{\v{s}}li{\'c}'s 38-vertex graph with girth five and cyclic edge connectivity three~\cite{Dos-JMC-02}]{\includegraphics[width=0.3\textwidth]{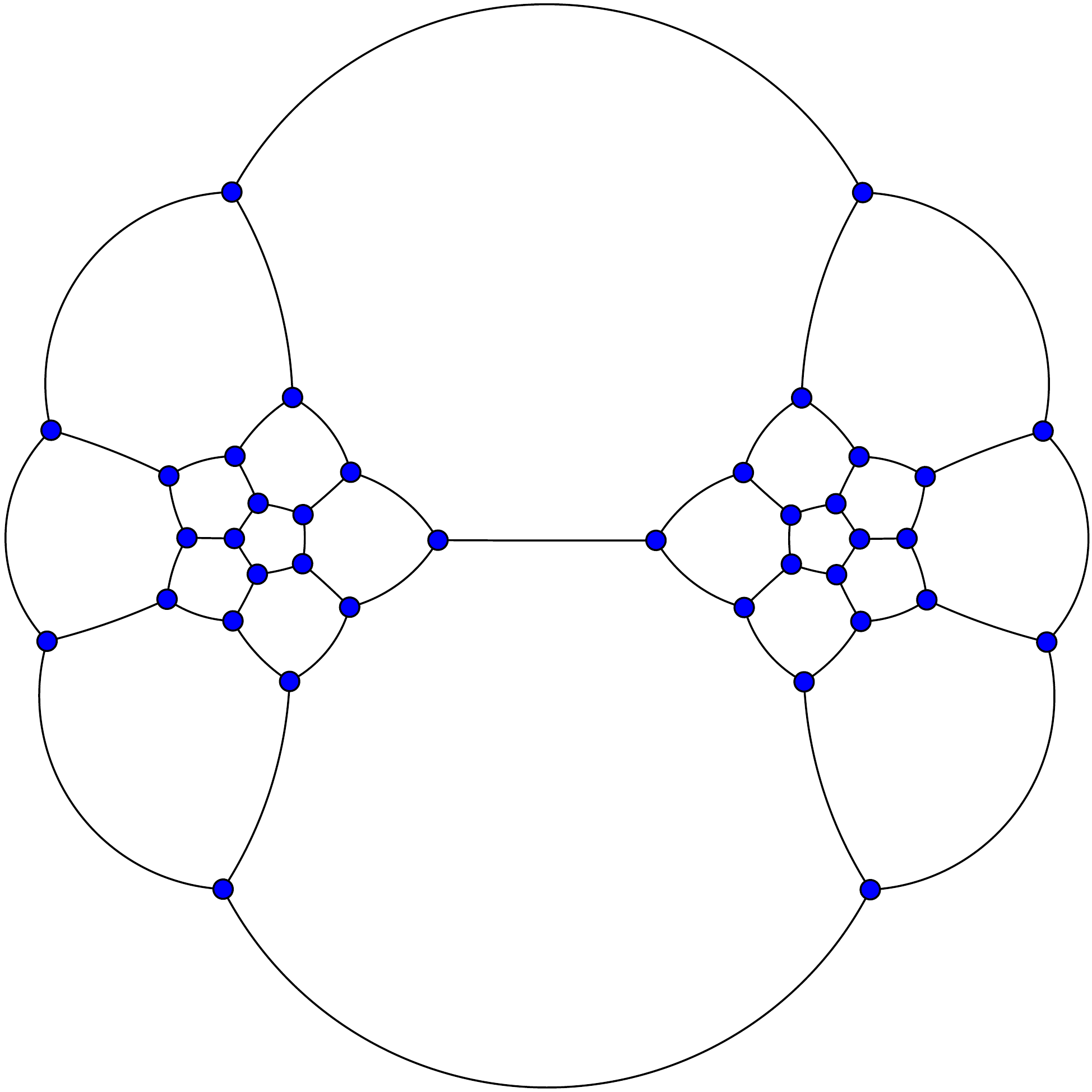}}
  \hfil
  \subfloat[252-vertex hexagonal mesh]{\includegraphics[width=0.3\textwidth]{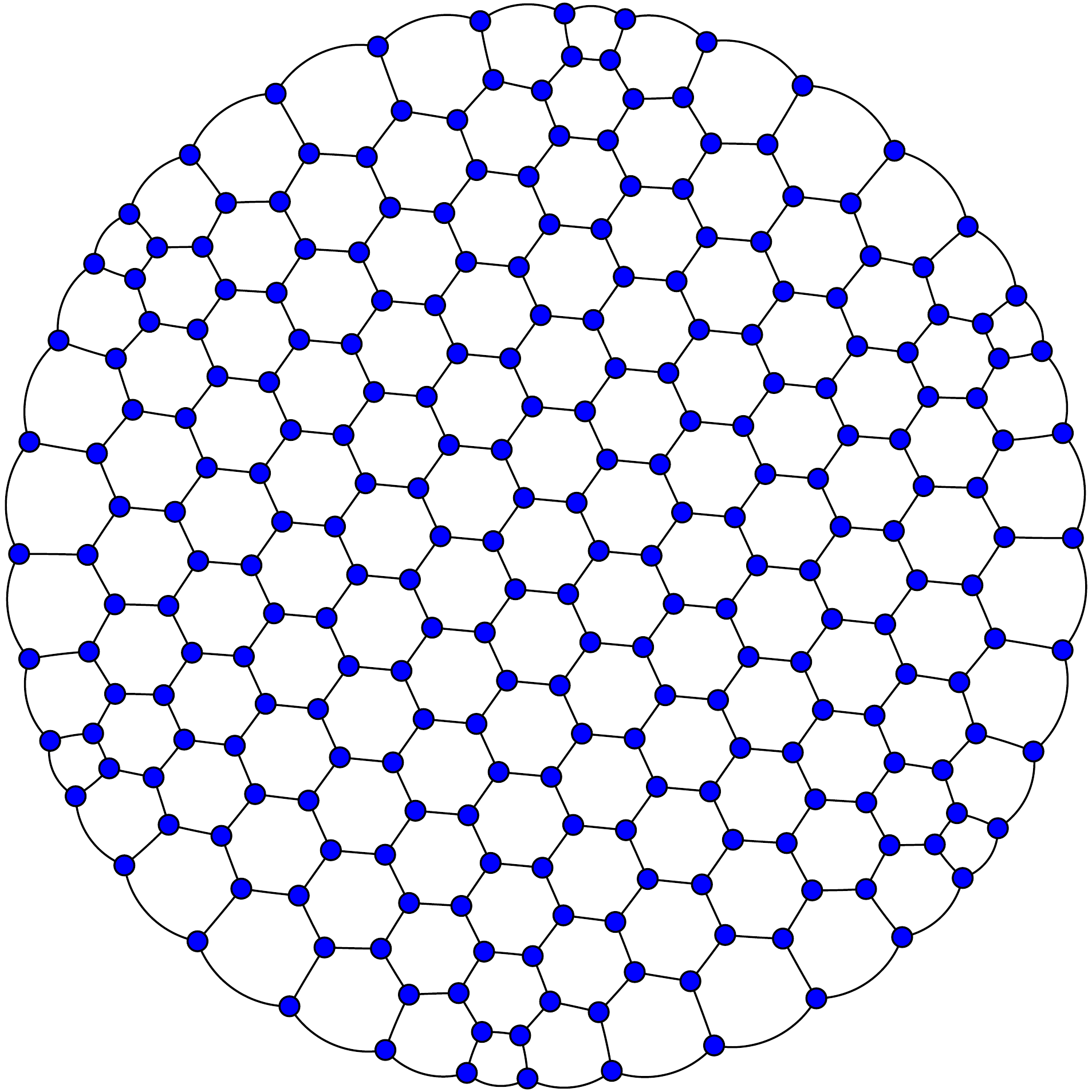}}\\
  \subfloat[Tutte's 46-vertex non-Hamiltonian cubic planar 3-connected graph~\cite{Tut-JLMS-46}]{\includegraphics[width=0.3\textwidth]{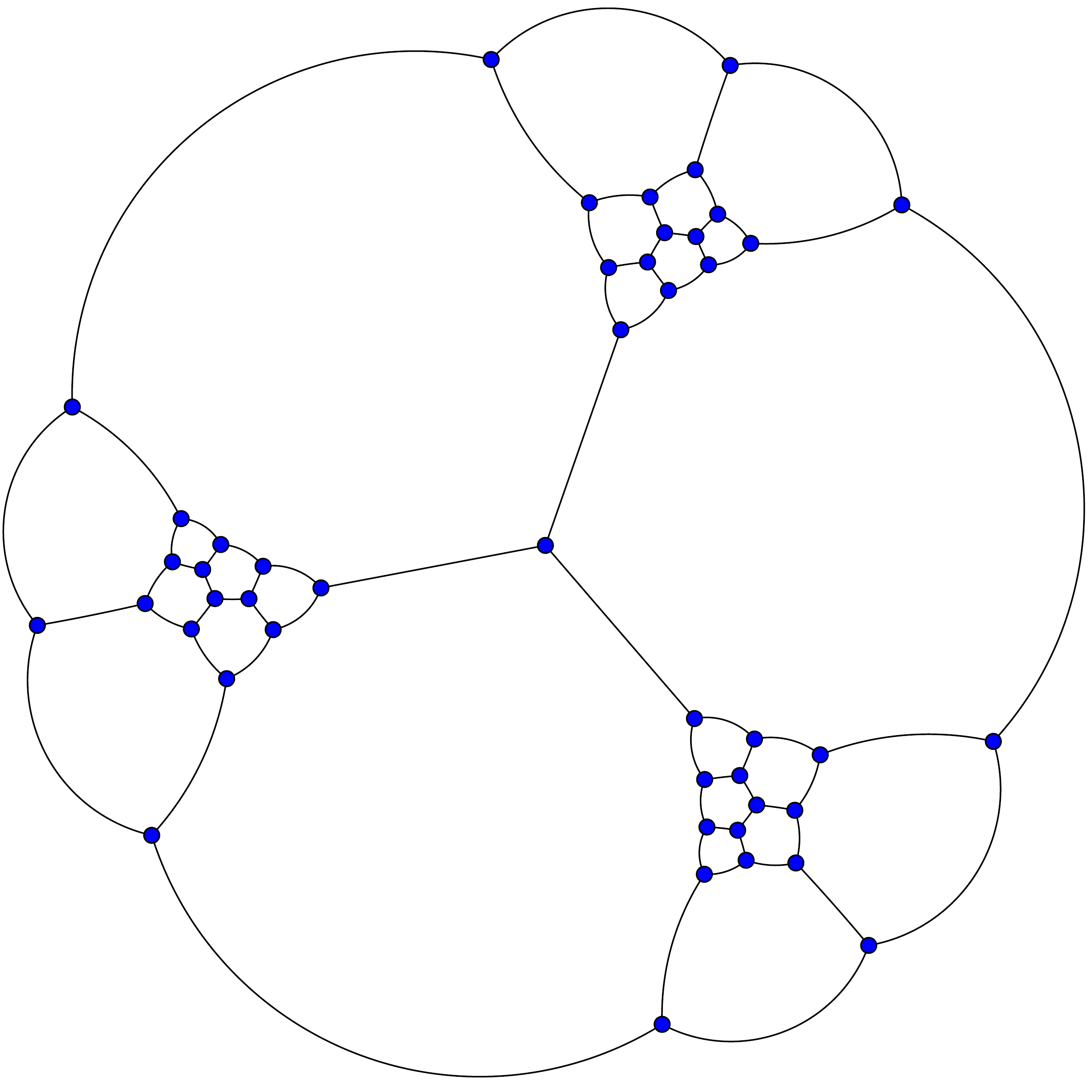}}
  \hfil
  \subfloat[Grinberg's 46-vertex non-Hamiltonian graph with cyclic edge connectivity five~\cite{Gri-LMY4-68}]{\includegraphics[width=0.3\textwidth]{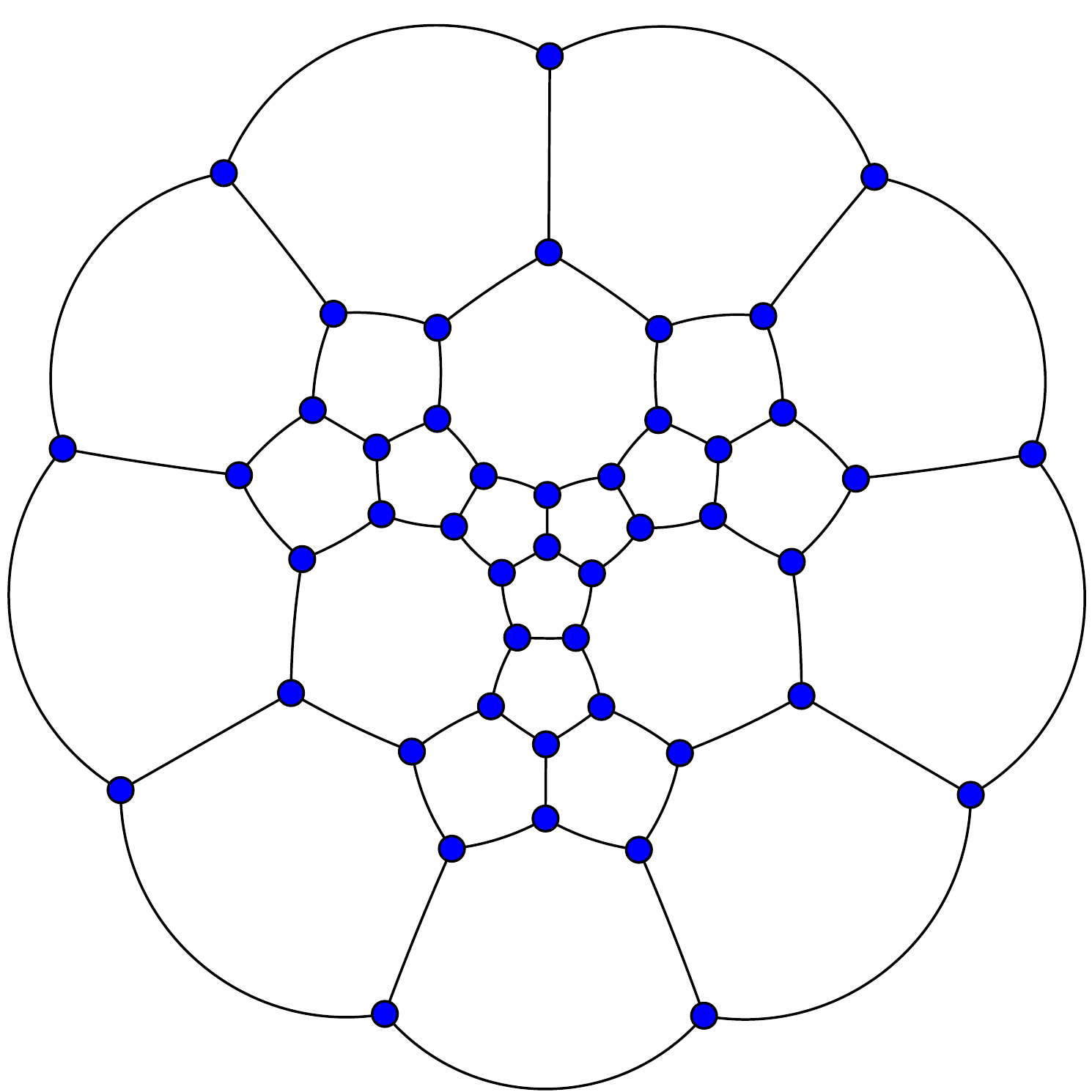}}
  \hfil
  \subfloat[46-vertex Halin graph formed from a complete ternary free tree]{\includegraphics[width=0.3\textwidth]{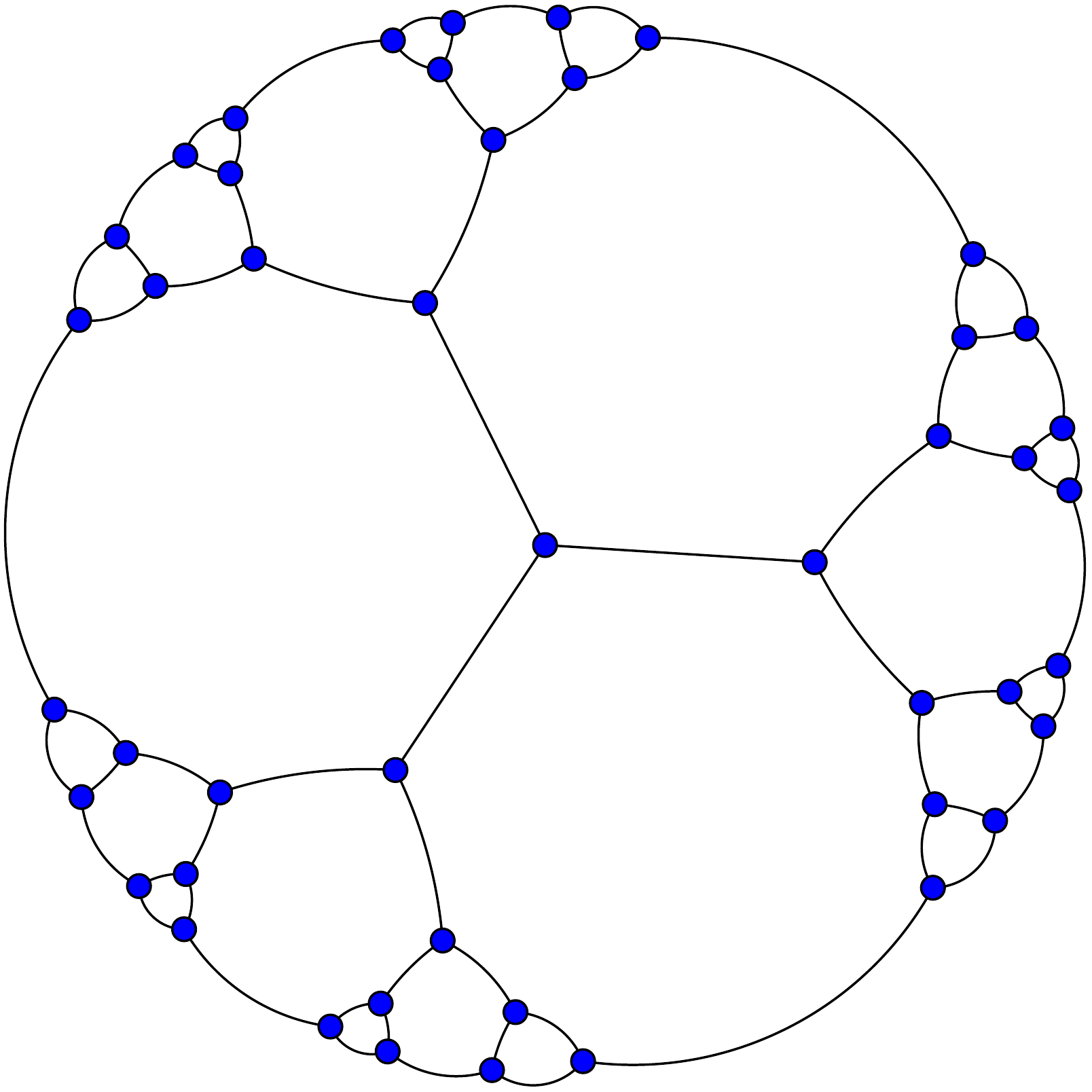}}\\
  \subfloat[60-vertex buckyball or truncated icosahedron]{\includegraphics[width=0.35\textwidth]{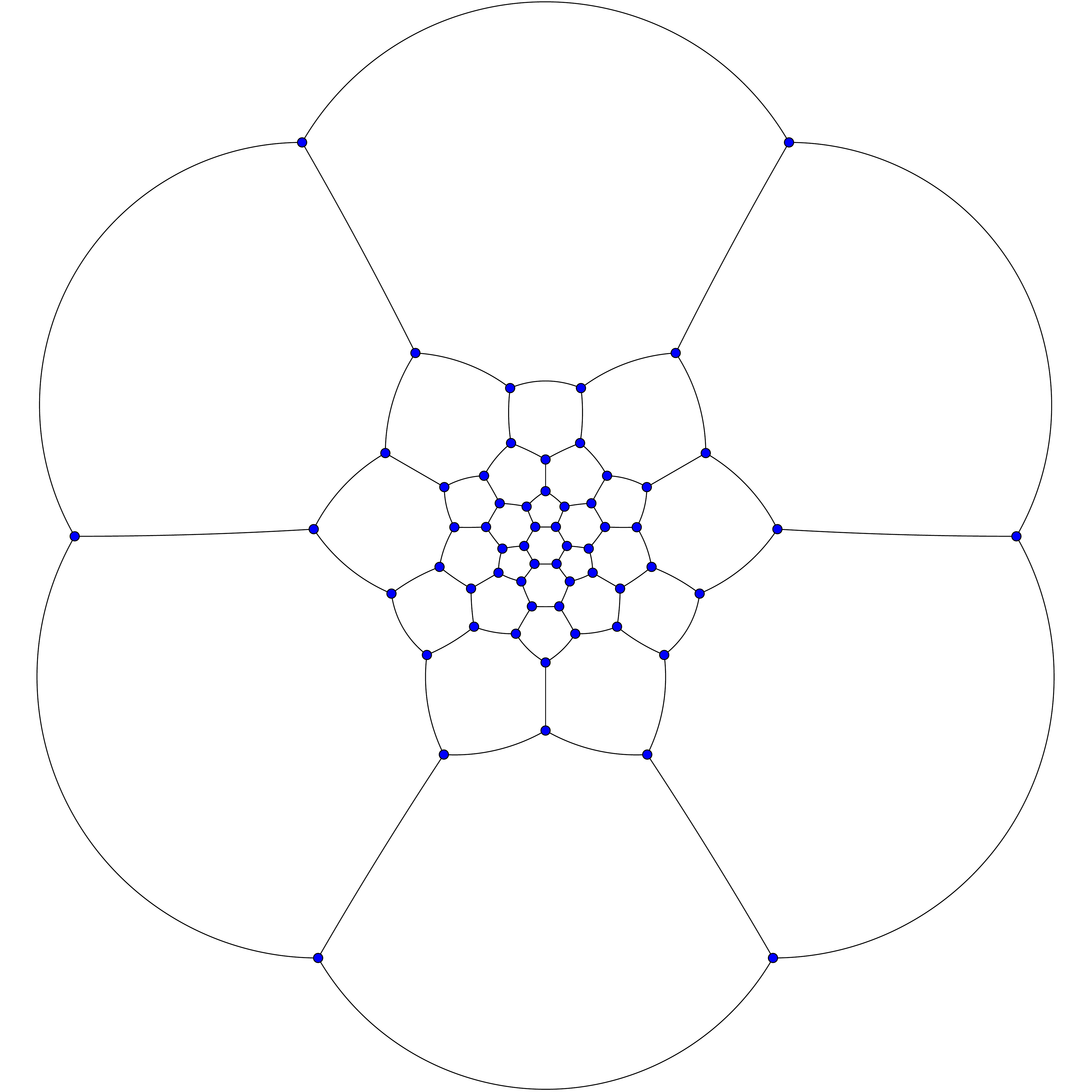}}
  \hfil
  \subfloat[Irregular 69-vertex graph]{\includegraphics[width=0.35\textwidth]{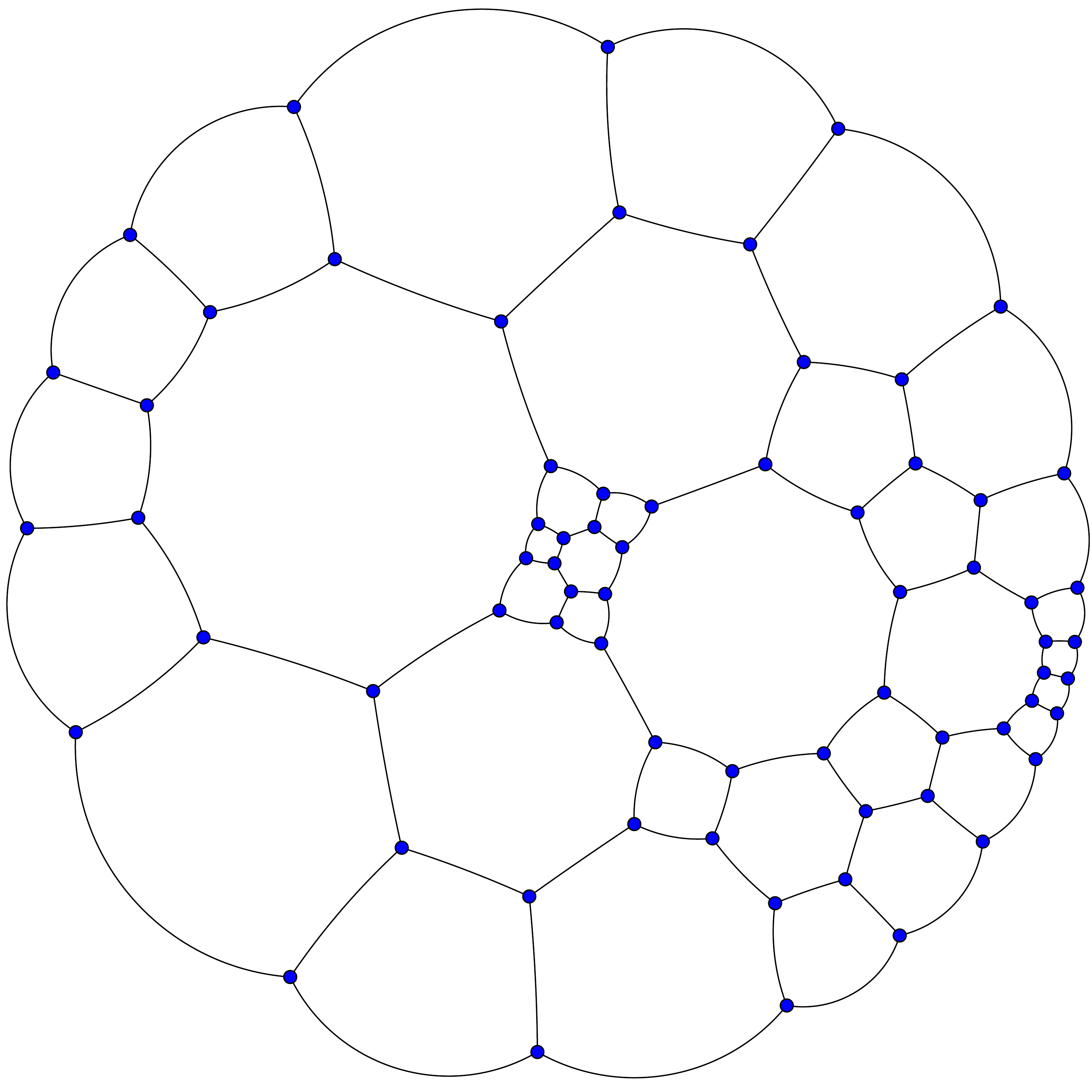}} \\
  \caption{Sample drawings from our implementation.}
  \label{fig:samples}
\end{figure}

We implemented in Python the algorithm for 3-connected graphs, including the Collins--Stephenson circle packing algorithm and a numerical improvement method for finding optimal M\"obius transformations. Our implementation takes as input a text file with one line per vertex; each line lists the identifiers for a vertex and its  three neighbors in clockwise order. The output drawing is represented in the SVG vector graphics file format. Figure~\ref{fig:samples} shows some drawings created by our implementation.

\section{Conclusions}

We have shown that all planar graphs with maximum degree three (and some planar graphs of degree four) have planar Lombardi drawings, greatly extending the classes of graphs for which such drawings are known to exist, and we have implemented our algorithm for the special case of 3-connected 3-regular graphs.

The drawings constructed by our new algorithm have a natural, organic shape, in which the outer face and all the interior faces are approximately circular, resembling soap bubble complexes. If the maximum number of edges per face of the input graph is bounded, then adjacent circles in the circle packing will have radii whose ratio is also bounded, from which it follows that the vertex spacing of the drawing is locally uniform. Our drawings automatically display any global cyclic or dihedral symmetries of the input graph, and may be used even in cases such as the truncated icosahedron which  our previous methods, designed for graphs with a high degree of symmetry~\cite{DunEppGoo-GD-10b}, are incapable of handling. They also display certain local symmetries of the graph, in the sense that, if $G$ is a subgraph of a 3-connected graph that is connected to the rest of the graph by exactly three edges, then every copy of $G$ in one of our drawings can be transformed into every other copy of $G$ by a M\"obius transformation.

In a followup paper, in preparation, we extend these results in several directions. In particular, we show that our Lombardi drawing methods for 3-connected and 2-connected planar graphs generate drawings for which it is possible to assign pressures to the faces in such a way that they obey the physical laws governing the static behavior of soap bubbles, and we use this method to characterize the graphs of planar soap bubbles as being exactly the bridgeless 3-regular planar graphs~\cite{Epp-soap}. (Our method for graphs with bridges does not produce valid soap bubbles.) In addition, we extend to intersecting disks the M\"obius-invariant power diagram defined here for disjoint disks, and we find a distance function in the Euclidean plane for which it is the minimization diagram, giving it an intrinsic definition rather than one relying on hyperbolic geometry.

\subsection*{Acknowledgements}

This work was supported in part by NSF grant
0830403 and by the Office of Naval Research under grant
N00014-08-1-1015.

\ifFull
{
\raggedright
\bibliographystyle{abuser} 
\bibliography{lombardi}
}\else
{\raggedright
\bibliographystyle{abuser}
\bibliography{lombardi}}
\fi

\newpage
\appendix
\section{A 4-regular graph with no planar Lombardi drawing}

\begin{figure}[b]
\centering\includegraphics[height=1.5in]{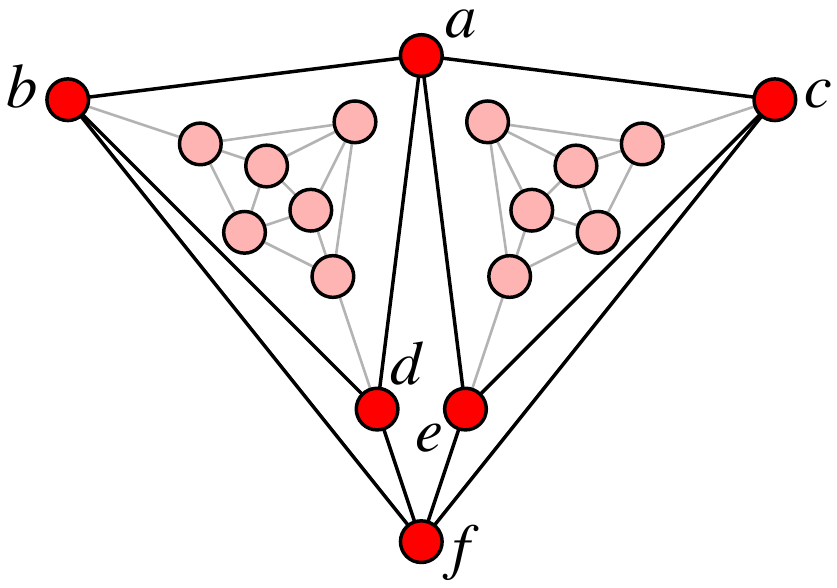}
\caption{The graph $G_{18}$ of Figure~\ref{fig:medial} (right), with vertex labels.}
\label{fig:nonlom-labeled}
\end{figure}

The graph $G_{18}$ of figure~\ref{fig:medial} (right) is shown again in Figure~\ref{fig:nonlom-labeled}. The important parts of this graph, for our purposes, are the six labeled vertices; the remaining vertices are present only to make the graph 4-regular, and could be contracted to single vertices to produce a simpler eight-vertex counterexample to the existence of planar Lombardi drawings while sacrificing regularity. Note that (up to automorphisms) the labeled part of $G_{18}$ has a unique planar embedding, and that the two quadrilateral faces $abfc$ and $adfe$ have their two opposite vertices $a$ and $f$ in common. Because $G_{18}$ is 4-regular, it has a Lombardi drawing with a circular layout, with the arcs representing its edges forming $45^\circ$ angles to the circle on which the vertices are drawn~\cite{DunEppGoo-GD-10b}. However, a circular Lombardi drawing of this type is not guaranteed to be planar, and we will show that $G_{18}$ does not have a planar Lombardi drawing.

A key idea in the proof is to relate quadrilateral faces of any possible drawing, such as faces $abfc$ and $adfe$, with circles, as the following lemma shows.

\begin{lemma}
\label{lem:quad-circumcircle}
Let $pqrs$ be a quadrilateral face of a planar Lombardi drawing, and suppose that all four of its vertices have degree four. Then there exists a circle through all four of the points $p$, $q$, $r$, and $s$, that does not cross any edges of $pqrs$.
\end{lemma}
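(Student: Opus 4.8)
The plan is to separate the two assertions packed into the statement: first that $p,q,r,s$ are concyclic (so that a circle through all four exists at all), and second that some such circle crosses none of the four edges. The concyclicity is the real content, and the non-crossing will then come almost for free. I begin with the one local consequence of the hypotheses: since every vertex has degree four and the drawing is Lombardi, the four edges at each vertex meet at equal $90^\circ$ angles, and the two edges bounding the face $pqrs$ at a given corner are consecutive in the rotation at that vertex, so the interior angle of the face at each of $p,q,r,s$ is exactly $90^\circ$ (measured between the tangent rays of the two incident arcs).

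For concyclicity I would pass from the curvilinear quadrilateral to its straight-chord quadrilateral on the same four vertices and compare angles. For a circular arc, the tangent-chord angle at either endpoint equals half the arc's signed central angle; hence at each corner the straight-chord interior angle differs from the $90^\circ$ arc-interior angle by the two tangent-chord half-angles of the incident arcs. Adding the straight-chord interior angles at the opposite corners $p$ and $r$ yields $180^\circ$ plus the sum of all four tangent-chord half-angles, and the same sum arises from $q$ and $s$. It therefore suffices to show that the four signed half-angles sum to zero, and this follows from the theorem of turning tangents: traversing the simple closed face boundary once produces total turning $2\pi$, the four corners contribute $4\cdot 90^\circ=2\pi$, so the arcs contribute zero net turning, i.e. their signed central angles sum to zero. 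Consequently opposite straight-chord interior angles are supplementary, which is exactly the condition for $p,q,r,s$ to lie on a common circle or line. One can lighten the bookkeeping by first applying a M\"obius transformation sending $p$ to $\infty$ (which preserves the Lombardi property, turns the two arcs at $p$ into perpendicular lines, and turns the sought circumcircle into a line), thereby reducing the claim to collinearity of the images of $q,r,s$; but the turning argument already handles every case uniformly.

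The non-crossing is then short. Let $\Gamma$ be the circle (or line) through $p,q,r,s$, and let $e$ be any of the four edges, an arc of a circle $C$. If $C\ne\Gamma$, then $C\cap\Gamma$ consists of at most two points and already contains the two endpoints of $e$, so $\Gamma$ meets $e$ only at those vertices and the open arc $e$ lies entirely on one side of $\Gamma$; if $C=\Gamma$, then $e$ lies along $\Gamma$. In either case $\Gamma$ crosses the relative interior of no edge of $pqrs$, which is what the lemma requires.

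The step needing the most care is the sign bookkeeping in the angle comparison. One must fix a single orientation of the boundary and a single sign convention for central angles so that the tangent-chord half-angles entering the ``opposite angles are supplementary'' identity are literally the same signed quantities whose vanishing sum comes from the turning computation; and one must confirm, via directed angles, that the supplementary relation is read correctly even when some arcs bulge outward and render the straight-chord quadrilateral non-convex. Verifying that these conventions line up, together with checking the degenerate possibilities (coincident or collinear vertices, or an edge that happens to lie on $\Gamma$), is the only delicate part of the argument.
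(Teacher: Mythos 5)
Your proof is correct in substance but takes a genuinely different route from the paper's. The paper begins with the inversion you mention only as an optional simplification: sending $p$ to $\infty$ turns the two arcs at $p$ into perpendicular rays and the sought circumcircle into a line, so concyclicity reduces to collinearity of the images of $q$, $r$, $s$, which is verified by a single tangent--chord computation at $r$ (the arc $qr$ subtends $\theta$, so $rs$ subtends $\pi-\theta$, and the angle $qrs$ comes out to exactly $\pi$), with a short subcase when one of the two arcs is a straight segment; the non-crossing claim is then the same two-intersection-point argument you give. You instead argue globally: comparing the curvilinear face with its straight-chord quadrilateral, writing each chord angle as $90^\circ$ plus two signed tangent--chord half-angles, and killing the correction term by the theorem of turning tangents, since the four $90^\circ$ corners already account for the full $2\pi$ of turning and hence the arcs' signed central angles sum to zero; this lands on the opposite-angles-supplementary criterion for a cyclic quadrilateral. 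Your version is more symmetric and yields the pleasant structural identity that the four signed central angles sum to zero, but it pays for this in exactly the bookkeeping you flag: the supplementary-angle criterion must be recast with directed angles mod $\pi$ when the chord quadrilateral is non-convex or crossed, and there is one wrinkle you do not mention, namely that if $pqrs$ happens to be the outer face the total turning is $-2\pi$ rather than $2\pi$ (this shifts the half-angle sum by a multiple of $\pi$, so it is harmless in the directed-angle formulation, and can also be dismissed by first applying a M\"obius transformation to make the face bounded). The paper's inversion disposes of all of these case distinctions at once, at the cost of privileging one vertex; your argument, once the sign conventions are pinned down, is a legitimate and somewhat more conceptual alternative.
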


\begin{figure}[t]
\centering\includegraphics[scale=0.45]{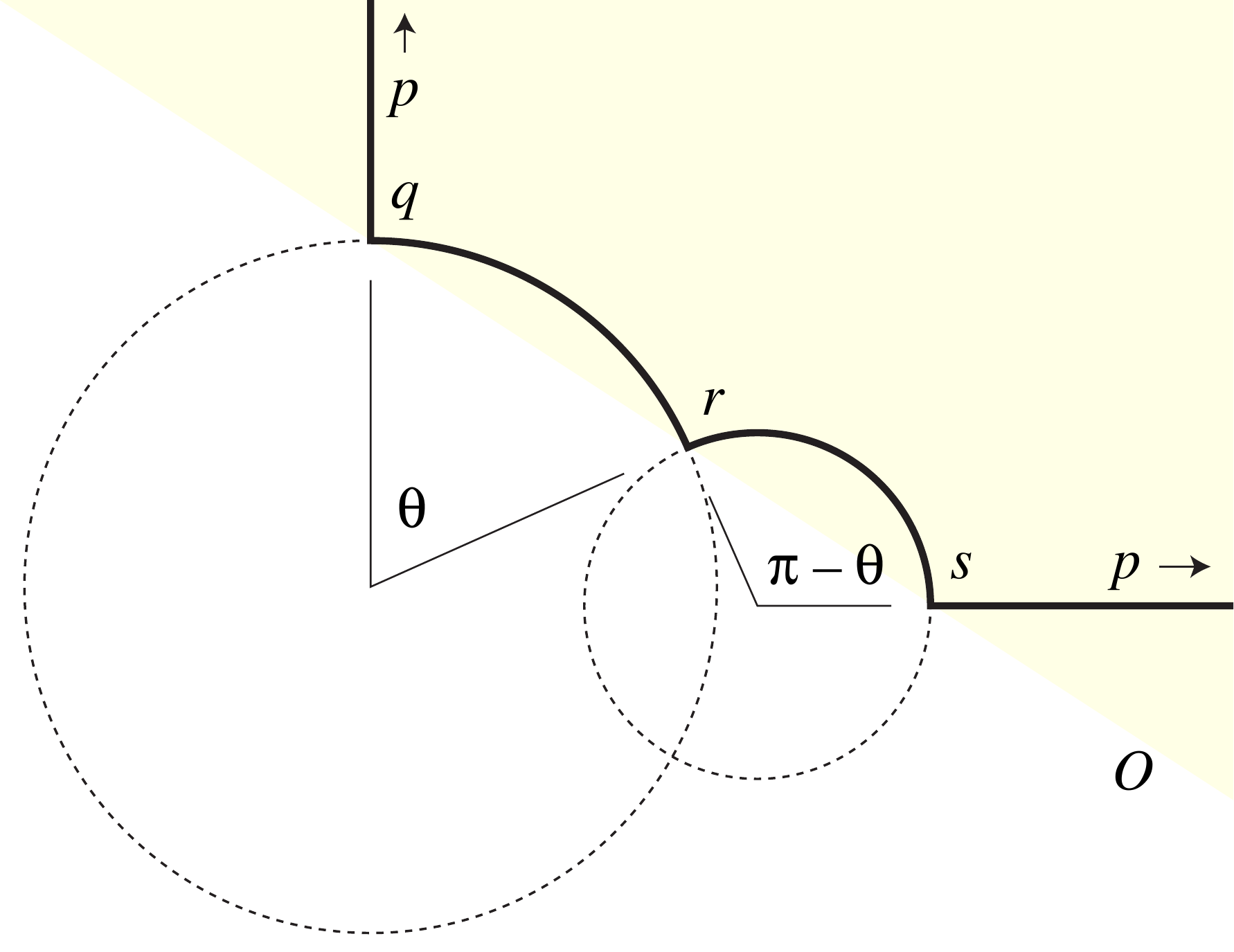}
\caption{Figure for proof of Lemma~\ref{lem:quad-circumcircle}}
\label{fig:quad-circumcircle}
\end{figure}

\begin{proof}
Perform an inversion through a circle centered at $p$, sending $p$ to $\infty$. The transformed copy of $pqrs$ includes two perpendicular rays $qp$ and $sp$, which we may assume without loss of generality are axis-parallel, as shown in Figure~\ref{fig:quad-circumcircle}.

If both $qr$ and $rs$ are drawn as curved arcs (rather than straight line segments) in the transformed drawing, let $\theta$ be the angle $qr$ subtends on its circle; in order for all the angles of the figure to be right, $rs$ must subtend the complementary angle $\pi-\theta$. The angle between arc $qr$ and line segment $qr$ (at either of the two intersection points of these two curves) is $theta/2$, and symmetrically the angle between arc $rs$ and line segment $rs$ is $(\pi-\theta)/2$. Angle $qrs$ is equal to the sum of these two angles together with the right angle between the two arcs; that is, it is $\theta/2 + \pi/2 + (\pi-\theta)/2 = \pi$. Thus, $q$, $r$, and $s$ are collinear.

If one of $qr$ and $rs$ (say $qr$) is not a curved arc, but a straight line segment, it must be axis-parallel in order to form a right angle at $q$, and the other arc $rs$ must be a semicircle in order to form right angles at $r$ and $s$. Again, in this case, $q$, $r$, and $s$ are collinear.

In either case, the line $O$ through the three points $q$, $r$, and $s$ also passes through $p=\infty$ (because every line passes through $\infty$). Reversing the inversion we initially performed, $O$ becomes the desired circle. It can't cross the circular arcs of $pqrs$ because two circles can only intersect at two points and each of these arcs already intersects $O$ at its two endpoints.
\end{proof}

\begin{theorem}
\label{thm:nonlom}
Graph $G_{18}$ has no planar Lombardi drawing.
\end{theorem}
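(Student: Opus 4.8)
The plan is to assume, for contradiction, that $G_{18}$ admits a planar Lombardi drawing, and to extract from its two quadrilateral faces a pair of circles whose forced positions are incompatible with the rest of the graph. Since $G_{18}$ is $4$-regular, every vertex has degree four, so Lemma~\ref{lem:quad-circumcircle} applies to both faces $abfc$ and $adfe$. It produces a circle $C_1$ through $a,b,f,c$ crossing none of the edges $ab,bf,fc,ca$, and a circle $C_2$ through $a,d,f,e$ crossing none of $ad,df,fe,ea$. The essential feature of the configuration is that these two circles share the two points $a$ and $f$, so they are either equal or meet in exactly those two points.

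Because every M\"obius transformation preserves the planar Lombardi property and sends circles to circles or lines, I would first normalize by performing an inversion centered at $a$, sending $a$ to $\infty$. After this transformation $C_1$ and $C_2$ become straight lines $L_1$ and $L_2$, each passing through the image of $f$, with $b,c$ on $L_1$ and $d,e$ on $L_2$. The cyclic order $a,b,f,c$ in which the vertices occur on the face $abfc$ is respected by $C_1$ (which does not cross that face's edges); removing the point $a=\infty$ turns this into the order $b,f,c$ along $L_1$, so $f$ lies strictly between $b$ and $c$, and symmetrically $f$ lies strictly between $d$ and $e$ on $L_2$. Consequently, in the generic case $L_1\neq L_2$, the two lines meet only at $f$, and the two rays of $L_2$ issuing from $f$ toward $d$ and toward $e$ occupy opposite half-planes of $L_1$. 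Thus $L_1$ strictly separates $d$ from $e$, and $L_2$ strictly separates $b$ from $c$.

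Next I would bring in the connectivity that distinguishes $G_{18}$ from an arbitrary drawing of two quadrilaterals. Since $abfc$ and $adfe$ are faces bounded by genuine edges, the edges at $a$ are exactly $ab,ac,ad,ae$ and those at $f$ are exactly $fb,fc,fd,fe$; hence $\{a,f\}$ together with $\{b,c,d,e\}$ spans a $K_{2,4}$ whose four quadrilateral faces are $abfc$, $acfd$, $adfe$, $aefb$ in rotational order. As $abfc$ and $adfe$ are faces of $G_{18}$ while $acfd$ and $aefb$ are not, and since $a$ and $f$ already have full degree, every further edge that makes $G_{18}$ four-regular attaches only to $b,c,d,e$ and is confined to the two regions $acfd$ and $aefb$; in particular the auxiliary subgraph inside $acfd$ joins $c$ to $d$, and the one inside $aefb$ joins $b$ to $e$. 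I would then close the argument by confronting this forced routing with the separations of the previous paragraph: using the fact (from Lemma~\ref{lem:quad-circumcircle}) that $C_1$ and $C_2$ hug their respective faces and cross none of the four edges $ac,cf$ and $fd,da$ that bound the region $acfd$, one shows that a connection between $c$ and $d$ inside $acfd$ cannot reach from one side of the separating circle to the other without crossing a drawn edge, contradicting planarity. The coincident case $L_1=L_2$, in which $b,c,d,e,f$ become collinear, is handled by the same separation argument with the single line playing both roles.

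The hard part will be this final topological step. The separation statements themselves are clean, but converting them into a genuine edge crossing requires tracking, region by region, exactly where the auxiliary connections among $b,c,d,e$ are permitted to run, and using the non-crossing guarantee of Lemma~\ref{lem:quad-circumcircle} to pin each circle against its own face rather than letting it stray into the neighboring region $acfd$ or $aefb$. Special care is needed to dispose of the degenerate configurations---$C_1=C_2$, or an edge that happens to run along one of the circles---so that the underlying Jordan-curve argument producing the crossing remains valid in every case.
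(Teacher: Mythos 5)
Your proposal begins the same way as the paper's proof---apply Lemma~\ref{lem:quad-circumcircle} to both quadrilateral faces and exploit the fact that the two circumscribing circles share the vertices $a$ and $f$---but the step you defer as ``the hard part'' is not merely hard, it is the entire content of the theorem, and the purely topological route you sketch for it cannot succeed. The circles $C_1$ and $C_2$ are auxiliary curves, not edges of the drawing, so an edge of $G_{18}$ that crosses one of them violates no planarity constraint; indeed, in the paper's argument the critical edge $ce$ \emph{does} cross circle $abfc$. Moreover, each region of the $K_{2,4}$ lying between the two given faces is simply connected with the two vertices it must join on its boundary, so there is no Jordan-curve obstruction to drawing \emph{some} curve between them; your separation statements ($L_1$ separates $d$ from $e$, $L_2$ separates $b$ from $c$) never bite on a curve whose endpoints lie \emph{on} those lines. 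No contradiction can come from planarity alone, since the underlying graph is planar and does admit planar (non-Lombardi) drawings with exactly these faces.

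The missing ingredients are geometric, and both are essential in the paper's proof. First, perfect angular resolution at the degree-four vertices $c$ and $e$ pins down the tangent direction of the edge $ce$: at $c$ it is the $180^\circ$ continuation of $ac$ or of $fc$, both of which lie outside circle $abfc$, so $ce$ must leave $c$ into the \emph{interior} of that circle; symmetrically it must arrive at $e$ from the \emph{exterior} of circle $adfe$. Second, a circular arc has curvature of constant sign, so it cannot turn clockwise while re-crossing the first circle and counterclockwise while crossing the second (and in the degenerate case $C_1=C_2$ it would need three intersections with a single circle). Without the angle condition and the rigidity of circular arcs, your argument would ``prove'' that the plane graph has no planar drawing at all, which is false. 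A secondary slip: in $G_{18}$ the rotation of $b,c,d,e$ about $\{a,f\}$ places $c$ adjacent to $e$ and $b$ adjacent to $d$ in the two intermediate quadrilateral regions (the relevant edge is $ce$), not $c$ with $d$ and $b$ with $e$ as you assumed.
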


\begin{proof}
If $G_{18}$ had a planar Lombardi drawing, then by Lemma~\ref{lem:quad-circumcircle} its quadrilateral faces $adfe$ and $abfc$ would have circumscribing circles. If they do not coincide, these circles intersect at the two vertices $a$ and $f$; the other four vertices must be placed between them on the four circular arcs connecting these two intersection points.
By performing an inversion centered within face $abfc$ we may assume without loss of generality that the drawing resembles the schematic layout shown in Figure~\ref{fig:nonlom-impossibility}, in the sense that face $abfc$ lies outside of its circle. The other circle, $adfe$, must contain one but not both of the vertices $b$ and $c$; without loss of generality we assume it contains $c$ and does not contain $b$, because the other case is symmetric under a relabeling of the vertices. Then (in order for the drawing to be planar) vertex $d$ may be placed anywhere along the arc of circle $adfe$ closest to $b$. The ray from $c$ radially outwards from the center of circle $abfc$ lies entirely within face $abfc$, and crosses circle $adfe$; vertex $e$ may be placed on its arc either above or below this ray, but by symmetry (flipping the diagram and relabeling the vertices if necessary) we assume that it is above the ray, as depicted schematically in the figure.

For this placement, the edge $ce$ must exit vertex $c$ towards the interior of circle $abfc$, because it forms a $180^\circ$ angle with either edge $ac$ or edge $fc$, both of which are exterior to the circle. It must then cross circle $abfc$ again, turning clockwise as it does, in order to stay within quadrilateral $aefc$ and reach vertex $e$. By a symmetric argument it must exit vertex $e$ towards the exterior of circle $adfe$, because it forms a $180^\circ$ angle with either edge $ae$ or edge $fe$, both of which are interior to the circle. Thus, coming from $c$, it must cross circle $adfe$, turning counterclockwise as it does, to reach the circle a second time.
But in a Lombardi drawing, edge $ce$ must be drawn as a circular arc, which cannot turn clockwise for part of its length and counterclockwise for part of its length. Therefore, this placement is impossible.

The remaining case arises when circles $abfc$ and $adfe$ coincide. But in this case, by similar reasoning, edge $ce$ must exit $c$ towards the interior of the single circle, cross the circle, and reach $e$ from the exterior, an impossibility since a circular arc cannot have three intersection points with a circle.
\end{proof}

\begin{figure}[t]
\centering\includegraphics[scale=0.45]{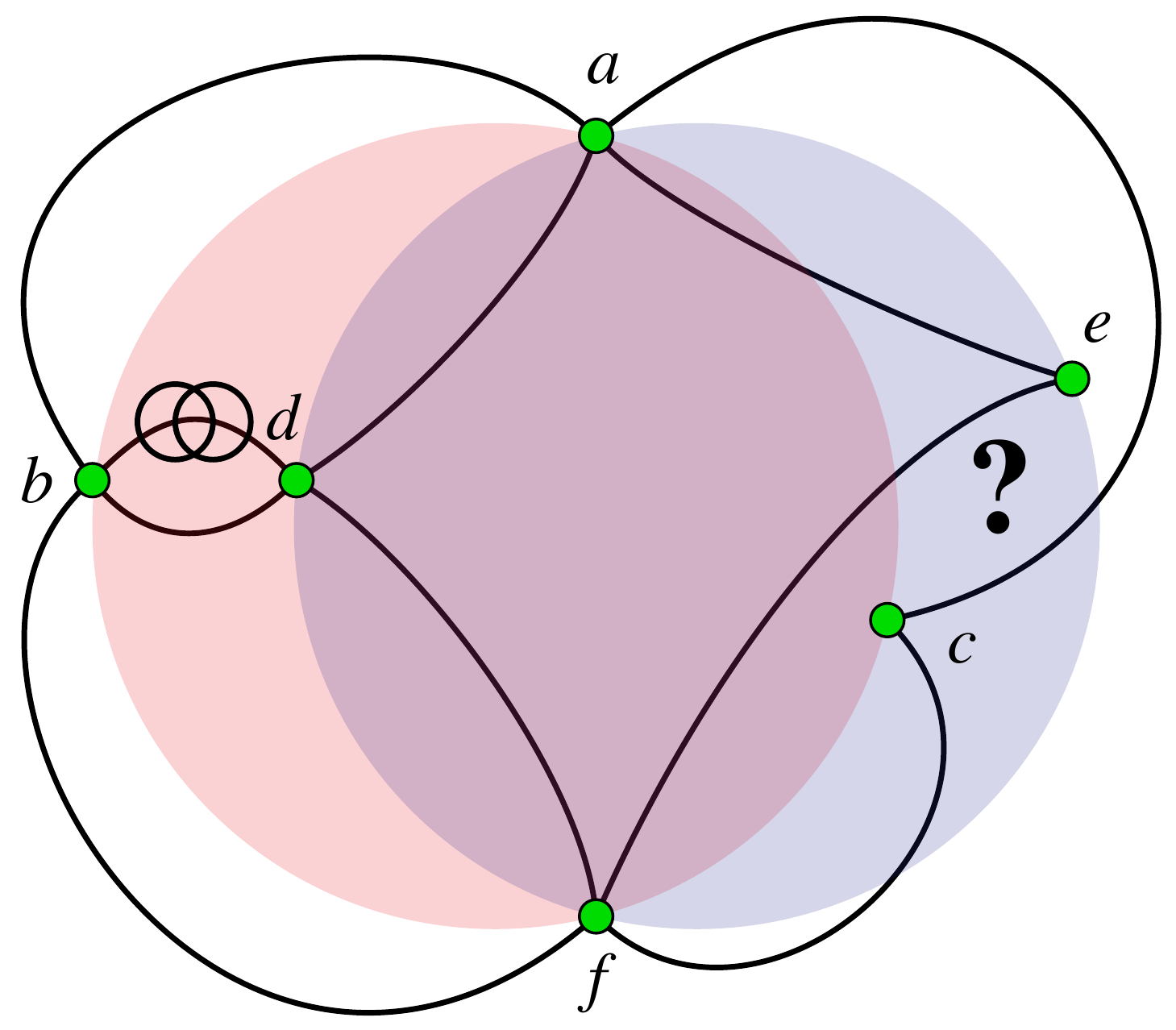}
\caption{Figure for proof of Theorem~\ref{thm:nonlom}}
\label{fig:nonlom-impossibility}
\end{figure}

\end{document}